\documentclass[letterpaper,journal]{IEEEtran}

\usepackage{cite}
\usepackage{amsmath,amssymb,amsfonts}
\usepackage{graphicx}
\usepackage{textcomp}
\usepackage{xcolor}
\usepackage{algorithm}
\usepackage{algpseudocode} % OR algorithmic, choose only one
\usepackage{booktabs}
\usepackage{multirow}
\usepackage{amsthm}
\usepackage{tikz}
\usepackage{pgfplots}
\pgfplotsset{compat=1.17}
\usepackage{balance}

\usepackage{hyperref}
\usepackage[capitalise]{cleveref}

\newtheorem{remark}{Remark}

\crefname{equation}{}{}
\Crefname{equation}{}{}

\crefname{figure}{Fig.}{Figs.}
\Crefname{figure}{Fig.}{Figs.}

\crefname{appendix}{Appendix}{Appendices}
\Crefname{appendix}{Appendix}{Appendices}

\crefname{assumption}{Assumption}{Assumptions}
\Crefname{assumption}{Assumption}{Assumptions}

\newtheorem{definition}{Definition}
\newtheorem{assumption}{Assumption}
\newtheorem{theorem}{Theorem}
\newtheorem{lemma}{Lemma}
\newtheorem{Corollary}{Corollary}
\title{
A Joint Power-Privacy Control Framework for Decentralized Learning over Heterogeneous Wireless Multicasting Networks
}

\author{
    Amir~Ziaeddini,~\IEEEmembership{Member,~IEEE}, 
    Yauhen~Yakimenka, 
    J{\"o}rg~Kliewer,~\IEEEmembership{Fellow,~IEEE}
    
    \thanks{
    Amir Ziaeddini, Yauhen Yakimenka and J{\"o}rg Kliewer are with the Helen and John C. Hartmann  Department of Electrical and Computer Engineering, New Jersey Institute of Technology (NJIT), Newark, NJ, USA
    (e-mail: az328@njit.edu; yauhen.yakimenka@njit.edu; jkliewer@njit.edu). 
    }
    
}

\begin{document}

\maketitle

% -------------------------------------------------------------
% ABSTRACT
% -------------------------------------------------------------
\begin{abstract}
In this paper, we propose a decentralized learning framework that incorporates both power control and privacy guarantees. Specifically, we enable a set of clients in a wireless multicast network to jointly train a common model while maintaining a prescribed per-iteration maximum privacy leakage level. The communication network is represented by a row-stochastic adjacency matrix, allowing us to capture asymmetric channel gains as well as heterogeneous maximum transmit power levels. 
%Differential privacy (DP) is applied concurrently with a power-sharing strategy that distributes maximum transmit power between the exchanged model coefficients and the injected Gaussian noise, enabling the learning process to achieve both convergence and privacy guarantees. 
Differential privacy is enforced through an explicit power-splitting strategy that allocates each node’s limited maximum transmit power between model coefficients and injected Gaussian noise, thereby jointly controlling learning performance and privacy leakage.
We further prove that the proposed algorithm achieves a cumulative regret bound of $O(\log T)$, where $T$ denotes the time horizon. To evaluate the practical performance of our approach, we perform comprehensive experiments on the CIFAR-10 dataset under both IID and non-IID data distributions, considering %a wide range of %realistic wireless settings. 
%The algorithm is tested under both IID and non-IID data heterogeneity, as well as %under 
different privacy levels, diverse numbers of clients, and various graph topologies. The results demonstrate strong performance across the considered settings and improved performance over existing methods, highlighting the effectiveness of the proposed algorithm under realistic wireless communication constraints.
\end{abstract}

% -------------------------------------------------------------
% INDEX TERMS
% -------------------------------------------------------------
\begin{IEEEkeywords}
Decentralized learning, differential privacy, wireless multicast network, power control, row-stochastic adjacency matrix.
\end{IEEEkeywords}

% -------------------------------------------------------------
\section{Introduction}\label{sec:intro}
\IEEEPARstart{R}{ecently}, decentralized learning (DL) has gained significant attention in federated and collaborative AI systems \cite{10542323, 10251949, 10420449, 9220780}, driven by the rapid growth of data generated by mobile devices, IoT sensors, autonomous systems, and communication networks. Unlike centralized learning (CL), where the training process relies on a central server to coordinate and aggregate information from participating users \cite{9460016,9306745,abrar2025nonconvexovertheairheterogeneousfederated}, DL distributes both computation and coordination across the network. Rather than relying on a central entity, each node performs local computations and exchanges model parameters or gradients directly with its neighboring nodes. This peer-to-peer structure removes the central bottleneck and single point of failure associated with CL, making DL well suited to naturally distributed applications such as connected vehicles, wearable health monitoring, smart cities, and large-scale sensor networks. However, effective DL requires reliable communication over the underlying graph and learning algorithms that remain convergent and stable under heterogeneous and non-IID data distributions as well as noisy communication channels.

Wireless multicast communication is well suited to DL, as it enables simultaneous information exchange among multiple nodes over a shared wireless medium \cite{10902529,9014530,10025677}. In particular, over-the-air computation (OAC) \cite{9563232,9322286,10506083,11148660} exploits wireless superposition to efficiently aggregate model updates. However, practical wireless networks involve heterogeneous channel gains, asymmetric links, and limited transmit power, which can significantly affect the learning process.

Beyond these communication challenges, DL also faces privacy risks arising from the exchange of model information. In particular, exchanged model coefficients can potentially be exploited by adversaries to recover users' original data. Prior work has demonstrated that techniques such as model inversion or reconstruction attacks can extract sensitive features or even recover original data samples from the exchanged loss function gradients or model coefficients \cite{zhu2019deep,geiping2020inverting,10024757}. These risks become even higher when the clients have small datasets or when the data has clear and recognizable patterns. To address this privacy risk, differential privacy (DP) provides a way to safeguard individual users' original data \cite{dwork2014algorithmic,9714350,9069945}. The idea behind DP is to add carefully calibrated random noise to the parameters exchanged between clients such that no single data point can be distinguished from another. Integrating DP into DL combines the benefits of distributed computation with privacy protection, allowing the systems to learn effectively from large  datasets while maintaining user data privacy.

% \textcolor{blue}{Since DL systems operate over communication networks, the underlying wireless transmission environment plays a critical role in the learning process. Wireless multicast communication has emerged as an efficient approach for DL systems due to its ability to simultaneously deliver information over a shared wireless medium \cite{10902529,9014530,10025677}. In particular, over-the-air computation (OAC) \cite{9563232,9322286,10506083,11148660} exploits the wireless superposition property to aggregate model updates directly in the air, improving communication efficiency and reducing latency. However, practical wireless multicast networks introduce challenges such as heterogeneous channel gains, fading, interference, asymmetric links, and limited transmit power budgets, which naturally lead to non-uniform aggregation behavior and complicate DL and privacy preservation. Although wireless multicasting provides an attractive communication paradigm for decentralized learning, existing DP-DL algorithms have largely been developed under idealized network models that overlook the communication characteristics of practical wireless multicast systems.}

Most existing works on differentially private decentralized learning (DP-DL) rely on the assumption that the network communication graph is either balanced or undirected, which forces the corresponding adjacency (weight) matrix to be doubly-stochastic \cite{9716792, 10506083, 10279097,10025677, 10115431,rodio2025optimizingprivacyutilitytradeoffdecentralized}. Although this assumption greatly simplifies the analysis and makes convergence easier to establish, it does not reflect how real communication networks operate. In practical wireless networks, links are often asymmetric and experience different channel conditions and interference levels. Therefore, enforcing doubly-stochastic weights rarely matches the actual behavior of large, heterogeneous, and dynamic networks. To relax these limitations, prior works have introduced decentralized optimization methods that avoid the balanced graph requirement. The subgradient-push frameworks \cite{7405263,6426375,6930814} and push-sum-based algorithms \cite{8433217, 10068288, 11079243} enable learning over directed graphs by using a column-stochastic weight matrix. Push-sum–based approaches further generalize the concept by utilizing two complementary weight matrices, namely a row-stochastic and a column-stochastic matrix, to correct directional imbalance and guarantee consensus within push–pull schemes \cite{10535197,8988200,10337617}. While effective in theory, all of these methods require carefully designed weight assignments that are tightly coupled to the graph structure, making them difficult to implement in wireless communication settings where channel conditions vary. In OAC, however, wireless superposition and heterogeneous channel gains naturally determine the relative contributions of the received signals. Normalizing these channel-dependent coefficients at each receiving node yields a row-stochastic mixing structure, where the effective weights reflect both the transmit powers and wireless channel conditions. Motivated by such a structure, \cite{7526803, 8267245, 10124282} adopt row-stochastic weighting schemes in which each client autonomously determines the weights associated with its multicast transmissions to neighboring nodes. Despite this flexibility, these works do not account for privacy and neglect the potential leakage arising from such exchanges. In \cite{9013030} and \cite{10552083}, the authors introduce DP-DL by injecting Laplace noise into the exchanged updates and assume that mixing is governed by a row-stochastic adjacency matrix. While these studies demonstrate that DP can be combined with DL over unbalanced directed graphs, the treatment of the weight matrix remains abstract and detached from the actual communication environment. In particular, the underlying wireless aspects such as node-specific transmit power constraints and heterogeneous channel gains are not reflected in the model. As a result, these approaches do not provide any mechanism for allocating each node’s limited power budget between multicasting model coefficients and injecting DP noise, which ultimately prevents them from achieving an appropriate balance between learning accuracy and data leakage.

Overall, existing research exposes three important gaps. First, most DP-DL algorithms rely on balanced or undirected communication graphs that are difficult to realize in practical wireless multicast networks. Second, although several existing decentralized optimization methods for directed graphs incorporate DP, they adopt generic noise injection mechanisms without accounting for the characteristics of wireless communication. Third, existing approaches lack a systematic mechanism for controlling the privacy-accuracy trade-off through transmit power allocation. These limitations motivate the development of a DP-DL framework that jointly exploits wireless multicasting, supports DL over unbalanced networks, and explicitly optimizes transmit power allocation to achieve a desirable balance between learning performance and privacy.

The main contributions are summarized as follows:
 \begin{itemize}
  \item We introduce a wireless multicast DP-DL framework for directed and unbalanced networks, where the row-stochastic aggregation weights naturally arise from heterogeneous wireless channel gains and transmit power allocation rather than being explicitly designed.

  \item We propose a novel joint power-privacy control framework that optimizes transmit power allocation to balance learning performance and privacy leakage while maintaining a predefined DP level.
  
  \item We provide a theoretical analysis by establishing DP guarantees and deriving an $O(\log T)$ cumulative regret bound that characterizes the effects of power control, noise injection, and network characteristics on the learning process.

  \item We conduct comprehensive experiments on the CIFAR-10 dataset across a variety of
  DP-DL scenarios, including multiple graph structures (fully connected, ring,
  and random topologies), varying numbers of clients, both IID and non-IID data distributions, and different privacy levels.
\end{itemize}

Compared to our previous work in \cite{11195385}, this paper substantially broadens the scope of the proposed DP-DL framework by incorporating a more comprehensive wireless communication model, extending the theoretical development, and providing a more extensive empirical evaluation, including an analysis of the cumulative privacy leakage, and a comparative performance evaluation against existing methods.

The remainder of this paper is organized as follows. \Cref{sec:prelim} reviews the work background and preliminaries, \cref{sec:system-model} presents the system model, and \cref{sec:algorithm} introduces the proposed algorithm. \Cref{sec:theoretical-analysis} provides the theoretical analysis, \cref{sec:sim-results} presents the simulation results, and \cref{sec:conclusion} concludes the paper. Additionally, the convergence proof is provided in \cref{app:proof}.
% -------------------------------------------------------------
\section{Background and Preliminaries}\label{sec:prelim}
% --------------------------------------------------------------
\subsection{Wireless DL via OAC}
DL aims to solve a global optimization problem without relying on a central server. Instead, a group of $K$ nodes communicate only with their neighbors to collaboratively minimize a shared global loss. Each node $i$ holds a private dataset $\mathcal{D}_i$ and a corresponding local objective function $f_i:\mathbb{R}^m\rightarrow\mathbb{R}$ defined over a convex constraint set $\Omega \subseteq \mathbb{R}^m$. The global learning task is formulated as the following optimization problem:
\begin{equation}\label{eq:optim-problem}
\min_{\mathbf{x} \in \Omega} F(\mathbf{x}) 
= \min_{\mathbf{x} \in \Omega} \sum_{i=1}^{K} f_i(\mathbf{x}) ,
% = \min_{\mathbf{x} \in \Omega} \sum_{i=1}^{K} f_i(\mathbf{x};D_i).
\end{equation}
where $\mathbf{x}$ is the model vector and each node maintains its own
model coefficients, $\mathbf{x}_{i,t}$. Here, $F(\mathbf{x})$ denotes
the global objective function obtained by aggregating the local objective
functions $f_i(\mathbf{x})$ of all $K$ nodes. The goal is to obtain an optimal solution to the global learning problem, denoted by
$\mathbf{x}^{*} = \arg\min_{\mathbf{x}\in\Omega} F(\mathbf{x})$.

The classical decentralized stochastic gradient descent (DSGD) algorithm addresses this optimization problem by combining gradient descent with consensus averaging. At iteration $t$, each node $i$ computes its stochastic gradient using a mini-batch $\mathcal B_{i,t} \subseteq \mathcal{D}_i$:
\[
 \mathbf{g}_{i,t} \triangleq \frac{1}{|\mathcal{B}_{i,t}|} \sum_{\xi\in \mathcal{B}_{i,t}} \nabla \ell(\mathbf{x}_{i,t};\xi),
\]
where $\ell(\mathbf{x}_{i,t}; \xi)$ denotes the sample-wise loss evaluated at the model coefficients $\mathbf{x}_{i,t}$, and data point $\xi \in \mathcal{B}_{i,t}$.

Nodes then perform two steps:
\begin{enumerate}
\item \textbf{Consensus mixing}:
\begin{equation}\label{eq:step-mixing}
\mathbf{y}_{i,t} = \sum_{j=1}^{K} a_{ij} \mathbf{x}_{j,t},
\end{equation}
where $A=[a_{ij}]$ is the mixing matrix compatible with the communication graph.

\item \textbf{Local update}:
\begin{equation}\label{eq:step-local-update}
\mathbf{x}_{i,t+1} = \mathbf{y}_{i,t} - \gamma_t  \mathbf{g}_{i,t},    
\end{equation}
where $\gamma_t$ is the learning rate.
\end{enumerate}

DSGD enables information to propagate across the network while each node simultaneously updates its model using its own local gradient. In connected graphs, this enables the entire set of nodes to collaboratively approximate the global optimum. 

In DL, the mixing matrix $A$ in \cref{eq:step-mixing} determines how nodes combine information from their neighbors. Under standard convergence conditions, when $A$ is doubly-stochastic (each row and column sums to one), the standard DSGD algorithm converges to the minimizer of the global objective in \cref{eq:optim-problem}. However, as we will describe in subsequent sections, our framework employs a row-stochastic adjacency matrix $A$ to capture the inherent asymmetry of wireless channel coefficients. In this case, the standard DSGD update rule in \cref{eq:step-local-update} becomes biased and no longer converges to the minimizer of $F(\mathbf{x})$ \cite{8316938}. Instead, it converges to the minimizer of a weighted objective $\bar{F}(\mathbf{x}) = \sum_{i=1}^K \pi_i f_i(\mathbf{x})$, where $\boldsymbol{\pi}=[\pi_1,\ldots,\pi_K]^\top$ is the left Perron eigenvector of $A$, satisfying $\boldsymbol{\pi}^\top A = \boldsymbol{\pi}^\top$. Because $\boldsymbol{\pi}$ is non-uniform, different nodes contribute unequally to the final solution, whereas in the doubly-stochastic case $\boldsymbol{\pi}$ becomes the uniform vector with entries $1/K$, giving each node equal influence. To remove this bias, the left Perron eigenvector is incorporated into \cref{eq:step-local-update} by scaling each node's gradient by $1/\pi_i$. The modified update is
\begin{equation}\label{eq:modified-update}
    \mathbf{x}_{i,t+1}
    = \sum_{j=1}^K a_{ij} \mathbf{x}_{j,t} - \frac{\gamma_t}{\pi_i}\, \mathbf{g}_{i,t}.
\end{equation}

This modification compensates for the imbalance caused by row-stochastic mixing and ensures that the algorithm converges to the minimizer of the global objective $F(\mathbf{x})$ \cite{8316938}.

In wireless networks, the consensus mixing in \cref{eq:step-mixing} can be implemented more efficiently using OAC. Instead of exchanging model parameters via standard point-to-point communication, OAC exploits the natural superposition of the wireless multiple access channel (MAC). Concurrent transmissions from multiple nodes are combined in the air, enabling the receiver to directly observe an analog sum of the transmitted signals. When multiple nodes transmit simultaneously, the $i$-th receiver directly obtains an analog sum of the signals {\cite{8952884,11131470,11048956},
\begin{equation*}
\sum_{j \in \mathcal{N}(i)} h_{ji} \sqrt{P_j} \mathbf{x}_{j,t} + \mathbf{n}_{i,t},    
\end{equation*}
where $h_{ji}$ is the wireless channel coefficient, $P_j$ is the transmit power, and $\mathbf{n}_{i,t}$ is Gaussian noise. The precise definitions of the communication graph, neighborhood set $\mathcal{N}(i)$, channel coefficients, transmit power constraints, and noise will be formally introduced in \cref{sec:system-model}.
% \textcolor{blue}{By exploiting this aggregated signal, OAC enables each node to incorporate information from its neighbors directly into the local learning update. Thus, the weighted mixing operation required by DSGD can be efficiently realized over the wireless medium through OAC. Through this integration, the network can collaboratively minimize the global objective and approach the optimal solution in a decentralized manner.}
\subsection{Differential Privacy}
When DL algorithms require nodes to exchange model coefficients, there is a risk that an adversary observing these exchanges may infer or recover sensitive details about the local datasets. DP provides a tool to mitigate such leakage by injecting carefully calibrated randomness into the transmitted
signals. In our decentralized setting, this typically means perturbing the transmitted model coefficients, ensuring that the influence of any single data sample remains statistically indistinguishable.

DP guarantees are expressed in terms of two parameters that quantify the allowable
information leakage. To describe these guarantees, we introduce the standard
definitions below.
\begin{definition}[\!\cite{9714350}]\label{def:DP}
A randomized mechanism $\mathcal{M}$ is said to satisfy $(\epsilon,\delta)$-DP 
if for every pair of neighboring datasets $D$ and $D'$ differing in exactly one entry,
and for every subset of outputs $\mathcal{O}$, the following inequality holds:
\begin{equation*}
    \Pr[\mathcal{M}(D) \in \mathcal{O}]
    \;\le\;
    e^{\epsilon}\,\Pr[\mathcal{M}(D') \in \mathcal{O}] + \delta.
\end{equation*}
\end{definition}

Here, $\epsilon$ represents the privacy budget or leakage, while $\delta$ is the failure probability, denoting the small probability that the privacy guarantee may fail to hold. A smaller value of $\epsilon$ provides stronger privacy protection, as the outputs of the mechanism 
become more indistinguishable across neighboring datasets. Conversely, increasing $\epsilon$ weakens 
privacy by allowing more information leakage.

A key quantity in DP analysis is the sensitivity of the function being perturbed.
It measures the maximum change in the output when only one element of the input dataset
is modified. Sensitivity determines the required noise to preserve privacy.

\begin{definition}[\!\cite{9714350}]\label{def:sensitivity}
The  $\ell_2$-sensitivity of a function $f$ is
\begin{equation}\label{eq:sensitivity-def}
    \Delta_f
    \;=\;
    \max_{D , D'} 
    \big\| f(D) - f(D') \big\|_2,
\end{equation}
where $D$ and $D'$ are neighboring datasets that differ in exactly one sample.
\end{definition}

To enforce DP, one common approach is the Gaussian mechanism \cite{9714350},
which adds Gaussian noise with variance proportional to the sensitivity of the function.
The resulting mechanism ensures that outputs from neighboring datasets remain
indistinguishable up to the chosen privacy parameters.

\textbf{Gaussian Mechanism}\cite{9714350}:
Let $f$ be a function with sensitivity $\Delta_f$. The Gaussian mechanism releases
\begin{equation*}
    \mathcal{M}(D) 
    \;=\;
    f(D) + \mathcal{N}\!\left(0,\sigma^2 \mathbf{I}\right),
\end{equation*}
where the standard deviation $\sigma$ required to achieve $(\epsilon,\delta)$-DP is given by
\begin{equation}\label{eq:sigma-DP}
    \sigma
    \;=\;
    \frac{\Delta_f}{\epsilon}
    \sqrt{2\ln\!\left(\frac{1.25}{\delta}\right)}.
\end{equation}

In DP-DL, this noise may be applied to individual gradients or transmitted model coefficients. By calibrating the noise according to the sensitivity and desired privacy budget, each participating node ensures that its local data remains protected across all communication rounds, while still enabling
collaborative model training. In this work, DP guarantees are provided at the level of each transmitted model update and each communication round.

Throughout the paper, $\Pi_{\Omega}(\cdot)$ denotes the Euclidean projection onto the constraint set $\Omega$ \cite{9013030}, defined as
\[
\Pi_{\Omega}(\mathbf{x}) \triangleq \arg\min_{\mathbf{y} \in \Omega} \|\mathbf{x} - \mathbf{y}\|.
\]
This projection ensures that each update $\mathbf{x}_{i,t+1}$ remains feasible. Since $\Omega$ is a nonempty closed convex set, $\Pi_{\Omega}(\cdot)$ is non-expansive (i.e., $1$-Lipschitz), implying that for any $\mathbf{u}, \mathbf{v}$,
\[
\big\|\Pi_{\Omega}(\mathbf{u}) - \Pi_{\Omega}(\mathbf{v})\big\| \le \|\mathbf{u} - \mathbf{v}\|.
\]

\section{System Model}\label{sec:system-model}
In this work, we study a wireless DP-DL system in which $K$ spatially distributed nodes collaborate to train a common global model without relying on the central server. Each node $i \in \{1,\ldots,K\}$ maintains its own private dataset $\mathcal{D}_i$ and computes local model coefficients that are exchanged only with its neighboring nodes, ensuring that raw data never leaves the originating device. Each node is equipped with two antennas, allowing full-duplex operation so that model parameters can be transmitted and received simultaneously, although each node is subject to a limited transmit power budget due to device hardware or battery constraints. We consider honest-but-curious neighboring nodes that may infer private information from received model updates.

This decentralized system is modeled as an unbalanced directed graph $\mathcal{G} = (\mathcal{V}, \mathcal{E})$, where $\mathcal{V} = \{1, 2, \ldots, K\}$ denotes the set of participating nodes and $\mathcal{E} \subseteq \mathcal{V} \times \mathcal{V}$ represents the set of directed communication links. The graph $\mathcal{G}$ is assumed to be strongly connected and time-invariant, implying that for every pair of nodes in the network, there exists at least one directed path that connects them. A directed edge $(i, j) \in \mathcal{E}$ indicates that node $i$ can transmit its information directly to node $j$, making $j$ an outgoing neighbor of $i$. For analytical convenience, we assume that every communication link is bidirectional, i.e., $(i,j)\in\mathcal{E}$ implies $(j,i)\in\mathcal{E}$, allowing neighboring nodes to communicate in both directions, while the corresponding directed edges may have different weights. Hence, the total incoming and outgoing weights at each node need not be equal, and the resulting communication graph can be unbalanced. The set of all neighbors of node $i$ is denoted by $\mathcal{N}_i = \{ j \mid (i, j) \in \mathcal{E} \}$, and its cardinality is given by $d_i = |\mathcal{N}_i|$. Each directed edge $(i, j)$ corresponds to a wireless transmission channel characterized by a channel coefficient that depends on factors such as distance between the nodes, large-scale path loss, small-scale fading, and environmental interference. 

Each wireless link between nodes $i$ and $j$ is modeled by a complex channel coefficient $h_{ij}=|h_{ij}|\angle\phi_{ij}$,
where $|h_{ij}|$ represents the channel gain affected by path loss, shadowing, and fading, and $\phi_{ij}$ represents the propagation phase shift. In this work, we consider only the channel gains, $|h_{ij}|$, to simplify the analysis. In practical wireless systems, the channel phase can be estimated using standard pilot-based channel estimation techniques. Assuming perfect channel state information (CSI), the transmitting nodes perform ideal phase pre-compensation prior to simultaneous transmission. Therefore, the effective channel coefficients reduce to the corresponding real-valued channel magnitudes $|h_{ij}|$, which are adopted throughout this paper for analytical tractability.

Every node must satisfy a maximum average transmit power per coordinate $P_{i}$, which it allocates between transmitting useful model coefficients and injecting artificial noise to guarantee DP. Allocating more power to information transmission improves the received signal strength and accelerates convergence, whereas allocating more power to noise increases privacy protection at the cost of reduced learning accuracy. Designing an appropriate power-splitting strategy is therefore essential to achieve the desired balance between communication reliability, privacy leakage, and model performance. To address these challenges, we develop a scheme composed of two components: an iterative update algorithm that guarantees convergence to a feasible DL solution, and a mechanism for selecting the power fractions to control the privacy-accuracy trade-off.

Let $\mathbf{x}_{i,t}$ denote the local model coefficients of node $i$ at iteration $t$. Before multicasting, node $i$ generates a zero-mean Gaussian noise vector $\boldsymbol{\eta}_{i,t}$ with time-varying variance per coordinate $\sigma_{i,t}^2$. The model coefficients and the noise are then jointly transmitted over the wireless channel by appropriately allocating the available transmit power between these two components. To manage the balance between privacy and learning performance, node $i$ allocates a fraction $\alpha_{i}$ of its available power $P_i$ to transmitting the model coefficients and the remaining fraction $\beta_{i}=1-\alpha_{i}$ to transmit the noise component. Our design philosophy is to control the privacy-accuracy trade-off through the explicit optimization of the power-splitting factors $\alpha_i$ and $\beta_i$. Unlike the channel noise power, which is determined by the wireless environment and is beyond our control, these power allocation factors are design variables that can be adjusted to achieve the desired privacy and learning performance. So, the resulting transmitted signal is constructed as

\begin{equation}\label{eq:transmitted-signal}
    \tilde{\mathbf{x}}_{i,t}
    = \sqrt{\alpha_{i}P_i}\,\mathbf{x}_{i,t}
    + \sqrt{\beta_{i}P_i}\,\boldsymbol{\eta}_{i,t}.
\end{equation}
% This transmitted signal satisfies the following average power constraint:
% \begin{equation*}
% \mathbb{E}\left[
% \left\|
% \sqrt{\alpha_{i} P_i}\,\mathbf{x}_{i,t}
% +
% \sqrt{\beta_{i} P_i}\,\boldsymbol{\eta}_{i,t}
% \right\|^2
% \right]
% \le P_i.
% \end{equation*}

The average transmit power of the transmitted signal can be computed as
\begin{align*}
\mathbb{E}\!\left[\|\tilde{\mathbf{x}}_{i,t}\|^2\right]
&=
P_i\!\left(
\alpha_i\mathbb{E}\!\left[\|\mathbf{x}_{i,t}\|^2\right]
+
\beta_i\mathbb{E}\!\left[\|\boldsymbol{\eta}_{i,t}\|^2\right]
\right),
\end{align*}
where the cross term vanishes because $\mathbf{x}_{i,t}$ and $\boldsymbol{\eta}_{i,t}$ are independent and
$\mathbb{E}[\boldsymbol{\eta}_{i,t}]=\mathbf{0}$. For the transmit power analysis, we additionally require that the model coefficients satisfy $\|\mathbf{x}_{i,t}\|\le\sqrt{m}$, for all \(i\) and \(t\). Moreover, we assume that the injected noise vector is Gaussian with variance $\sigma_{i,t}^2\le1$ per-coordinate, i.e., $\boldsymbol{\eta}_{i,t}\sim
\mathcal N(\mathbf{0},\sigma_{i,t}^2\mathbf I_m)$. Therefore, we have
\[
\mathbb{E}\!\left[\|\mathbf{x}_{i,t}\|^2\right]\le m,
\qquad
\mathbb{E}\!\left[\|\boldsymbol{\eta}_{i,t}\|^2\right]
=m\sigma_{i,t}^2\le m,
\]
from which it follows that
\[
\mathbb{E}\!\left[\|\tilde{\mathbf{x}}_{i,t}\|^2\right]
\le
mP_i(\alpha_i+\beta_i)
=
mP_i,
\]
where $\alpha_i+\beta_i=1$. Consequently,
\[
\frac1m
\mathbb{E}\!\left[\|\tilde{\mathbf{x}}_{i,t}\|^2\right]
\le P_i.
\]

This condition guarantees that the average per-coordinate transmit power does not exceed $P_i$.

Through this mechanism, each neighboring node receives a controlled mixture of the actual model update and DP noise, enabling collaborative learning while maintaining DP guarantees. \Cref{fig:ring-square} illustrates an example DP-DL network with $K=4$ nodes arranged in a ring topology.

%%%%%%%%%%%%%%%%%%%%%%%%%%%%%%%%%%%
\begin{figure}
\centering
\scalebox{1}{
\begin{tikzpicture}[>=stealth]

\tikzset{
    thick node/.style={minimum size=1.4cm, inner sep=0, outer sep=0}
}

% ============================
% PERFECT SQUARE (4x4)
% ============================
\node[thick node] (1) at (-2, 2) {\includegraphics[width=1cm]{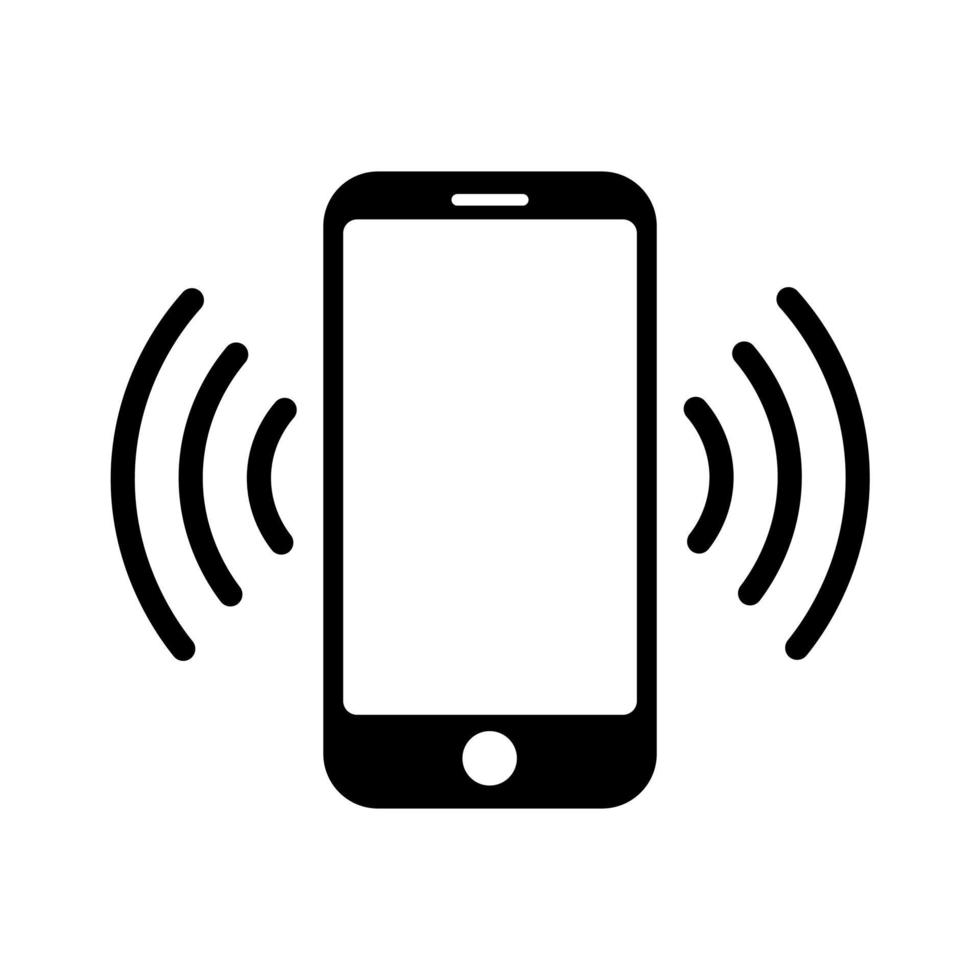}};
\node[thick node] (2) at ( 2, 2) {\includegraphics[width=2cm]{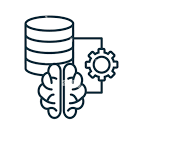}};
\node[thick node] (3) at ( 2,-0.7) {\includegraphics[width=1cm]{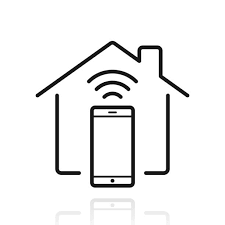}};
\node[thick node] (4) at (-2,-0.7) {\includegraphics[width=1cm]{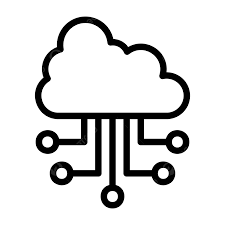}};

% ============================
% LINKS WITH EXTRA SPACING
% ============================

% ---- 1 <-> 2 ----
\draw[->, ultra thick, blue, shorten >=6pt, shorten <=6pt]
    ([yshift=3pt]1.east) -- ([yshift=3pt]2.west)
    node[midway, above right] {$h_{12}$};

\draw[->, ultra thick, red, shorten >=6pt, shorten <=6pt]
    ([yshift=-3pt]2.west) -- ([yshift=-3pt]1.east)
    node[midway, below left] {$h_{21}$};

% ---- 2 <-> 3 ----
\draw[->, ultra thick, red, shorten >=6pt, shorten <=6pt]
    ([xshift=3pt]2.south) -- ([xshift=3pt]3.north)
    node[midway, right] {$h_{23}$};

\draw[->, ultra thick, green, shorten >=6pt, shorten <=6pt]
    ([xshift=-3pt]3.north) -- ([xshift=-3pt]2.south)
    node[midway, left] {$h_{32}$};

% ---- 3 <-> 4 ----
\draw[->, ultra thick, green, shorten >=6pt, shorten <=6pt]
    ([yshift=-3pt]3.west) -- ([yshift=-3pt]4.east)
    node[midway, below] {$h_{34}$};

\draw[->, ultra thick, yellow, shorten >=6pt, shorten <=6pt]
    ([yshift=3pt]4.east) -- ([yshift=3pt]3.west)
    node[midway, above] {$h_{43}$};

% ---- 4 <-> 1 ----
\draw[->, ultra thick, yellow, shorten >=6pt, shorten <=6pt]
    ([xshift=-3pt]4.north) -- ([xshift=-3pt]1.south)
    node[midway, left] {$h_{41}$};

\draw[->, ultra thick, blue, shorten >=6pt, shorten <=6pt]
    ([xshift=3pt]1.south) -- ([xshift=3pt]4.north)
    node[midway, right] {$h_{14}$};

% ============================
% FORMULAS NEXT TO NODES
% ============================

% Node 1 (left)
\node[anchor=east] at ([xshift=0cm]1.west) {\scriptsize$
\begin{aligned}
&\sqrt{\alpha_{1}P_1}\, \mathbf{x}_{1,t} \\
&\quad +\\
&\sqrt{\beta_{1}P_1}\, \boldsymbol{\eta}_{1,t}
\end{aligned}
$};

% Node 2 (right)
\node[anchor=west] at ([xshift=0cm]2.east) {\scriptsize$
\hspace{-15pt}
\begin{aligned} 
&\sqrt{\alpha_{2}P_2}\, \mathbf{x}_{2,t} \\
&\quad +\\
&\sqrt{\beta_{2}P_2}\, \boldsymbol{\eta}_{2,t}
\end{aligned}
$};

% Node 3 (right)
\node[anchor=west] at ([xshift=0cm]3.east) {\scriptsize$
\hspace{-5pt}
\begin{aligned}
&\sqrt{\alpha_{3}P_3}\, \mathbf{x}_{3,t} \\
&\quad +\\
&\sqrt{\beta_{3}P_3}\, \boldsymbol{\eta}_{3,t}
\end{aligned}
$};

% Node 4 (left)
\node[anchor=east] at ([xshift=0cm]4.west) 
{\scriptsize$
\begin{aligned}
&\sqrt{\alpha_{4}P_4}\, \mathbf{x}_{4,t} \\
&\quad +\\
&\sqrt{\beta_{4}P_4}\, \boldsymbol{\eta}_{4,t}
\end{aligned}
$}; 
\end{tikzpicture}
}
\caption{A representative ring topology of the proposed DP-DL framework with $K=4$ clients, where each client splits its power budget $P_i$ between model coefficients $\mathbf{x}_{i,t}$ and Gaussian DP noise $\boldsymbol{\eta}_{i,t}$ using $\alpha_i$ and $\beta_i$, respectively, for transmission over wireless channels $|h_{ij}|$.
}
\vspace{-3ex}
\label{fig:ring-square}
\end{figure}

%%%%%%%%%%%%%%%%%%%%%%%%%%%%%%%%%%%%%%%
At each iteration $t$, all neighbors of node $i$ multicast their constructed signals simultaneously over the shared wireless medium. The channel thus behaves as a Gaussian MAC, and node $i$ receives the superposition of its neighbors’ signals through OAC. The received signal is given by

\begin{equation}\label{eq:received-signal}
    \begin{aligned}
    \mathbf{y}_{i,t}
    &= \sum_{j\in\mathcal{N}_i}
    |h_{ji}|\tilde{\mathbf{x}}_{j,t}\\
    &= \sum_{j\in\mathcal{N}_i}
    |h_{ji}|\Big(
    \sqrt{\alpha_{j}P_j}\mathbf{x}_{j,t}
    +
    \sqrt{\beta_{j}P_j}\boldsymbol{\eta}_{j,t}
    \Big).    
    \end{aligned}
\end{equation}

Additive thermal noise is omitted for simplicity, since the injected DP noise is Gaussian and follows the same statistical form, allowing to treat both sources equivalently. The heterogeneous channel gains $|h_{ji}|$ and transmit power levels $P_j$, together with the power-splitting factors $\alpha_j$ and $\beta_j$, induce different scaling of the model and noise components in the OAC aggregation. The resulting signal is then used for the local model update at node $i$.

\begin{remark}[Effect of Additive Channel Noise]
If additive Gaussian channel noise with variance $\tau_{i,t}^2$ is present, the effective noise variance received at node $i$ becomes
\[
\left(\sigma_{i,t}^{(c)}\right)^2
=
\sum_{k\in\mathcal{N}_i}
|h_{ki}|^2\beta_kP_k\sigma_{k,t}^2
+\tau_{i,t}^2.
\]

Hence, to achieve a target effective noise variance
$\left(\sigma_{i,t}^{\mathrm{req}}\right)^2$, the required aggregate DP noise contribution must satisfy
\[
\sum_{k\in\mathcal{N}_i}
|h_{ki}|^2\beta_kP_k\sigma_{k,t}^2
\geq
\max\!\left\{
0,\,
\left(\sigma_{i,t}^{\mathrm{req}}\right)^2-\tau_{i,t}^2
\right\}.
\]

Thus, the intrinsic channel noise naturally contributes to privacy protection, potentially allowing smaller DP noise power fractions $\beta_k$ and larger model power fractions $\alpha_k$, which can improve learning performance. This provides an additional degree of freedom for balancing privacy and learning performance in wireless networks. 
\end{remark}
% -------------------------------------------------------------
\section{Proposed Algorithm}\label{sec:algorithm}
The proposed method begins by determining how each node allocates its power between transmitting model coefficients and injecting Gaussian noise for DP. Since this allocation directly influences both learning performance and privacy leakage, the first step is to select the largest feasible set of power fractions $\{\alpha_i\}$ that satisfy a prescribed privacy threshold, $\epsilon_{\max}$. This is formulated as the maximization problem
\begin{equation}\label{eq:alpha-LP}
%\begin{aligned}
\max_{0 \le \alpha_j \le 1}\;\sum_{j=1}^K \alpha_j,
\quad \text{s.t.}\;\;\epsilon_{ij} \le \epsilon_{\max}, \quad \forall\, i,j \in \mathcal{V},
%\end{aligned}
\end{equation}
where $\epsilon_{ij}$ is the privacy leakage experienced by node $i$ due to the transmission of node $j$. A detailed derivation of the expression for $\epsilon_{ij}$ is given in \cref{thm:privacy} in \cref{sec:theoretical-analysis}. Although the privacy leakage expression in Theorem~1 depends on the iteration index $t$ through the learning rate $\gamma_t$ and the DP noise variance $\sigma_{i,t}^2$, the proposed framework adopts a time-varying DP noise standard deviation proportional to the learning rate, i.e., $\sigma_{i,t} \propto \gamma_t$. As a result, the time-dependent terms cancel in the privacy leakage expression, making $\epsilon_{ij}$ independent of $t$. Consequently, the optimization problem reduces to a static linear program whose solution is computed once before the training procedure and remains fixed throughout all iterations. The resulting $\alpha_i$ and $\beta_i = 1 - \alpha_i$ remain fixed for the entire learning process. The constraint $\epsilon_{ij} \le \epsilon_{\max}$ ensures that, for every directed communication link $(j \rightarrow i)$, the privacy leakage resulting from node $j$'s transmitted signal remains below a predefined tolerance level. 

The quantity $\epsilon_{ij}$ captures how much the received signal at node $i$ can change when a single data sample in node $j$’s dataset is modified; hence, it represents the local DP risk imposed by $j$ on $i$. By enforcing that each leakage term stays below $\epsilon_{\max}$, the system provides a uniform privacy guarantee across the entire network. A larger $\alpha_i$ yields stronger received model signals and directly improves convergence speed and robustness; however, any increase in $\alpha_i$ relaxes the DP constraint, potentially increasing the privacy leakage $\epsilon_{ij}$. The maximization problem therefore identifies the largest feasible set of power-splitting coefficients that simultaneously improve learning performance and satisfies the DP budget under a given threshold $\epsilon_{\max}$. These optimized power fractions are then utilized in the update rule and adjacency matrix construction, forming the basis of the proposed DP-DL algorithm.

Once the optimal power-splitting coefficients $\alpha_i$ and $\beta_i$ are obtained, each node $i$ constructs its privacy-protected signal $\tilde{\mathbf{x}}_{i,t}$ using \cref{eq:transmitted-signal}. 
This signal embeds both the model coefficients and DP noise. Then, each node $i$ multicasts its constructed signal through the wireless channel to all of its neighbors. Due to the superposition property of wireless communication and the use of OAC, each receiving node obtains an 
aggregated signal that combines the contributions from its neighbors. 
Node $i$ receives the OAC-aggregated signal $\mathbf{y}_{i,t}$, whose expression is provided in~\cref{eq:received-signal}. Accordingly, the local update rule should be designed to simultaneously achieve three objectives: efficiently utilize the aggregated wireless signal, preserve privacy through injected DP noise, and maintain convergence in the presence of the resulting noise perturbations.
Motivated by the approaches in~\cite{9013030} and \cite{10025677} and given the received signal $\mathbf{y}_{i,t}$, each node updates its local model coefficients by using the modified DSGD update rule in \cref{eq:step-local-update} as follows:
\begin{equation}\label{eq:node-update-rule}
\begin{aligned}
    &\mathbf{x}_{i,t+1} 
    \\&= 
    %\raisebox{0pt}{\scalebox{1.5}{$\Pi$}}_{\Omega} 
    \Pi_{\Omega}
    \Bigg (\frac{\mathbf{y}_{i,t}}{c_{i} (d_{i}+1}) + \Big(\frac{1} {d_{i}+1}\Big) \Big(\mathbf{x}_{i,t} + \sqrt{\frac{\beta_{i}}{\alpha_{i}}} \boldsymbol{\eta}_{i,t}\Big) -\gamma_{t}\frac{\mathbf{g}_{i,t}}{z_{ii,t}}\Bigg)\\
    &= 
    %\raisebox{0pt}{\scalebox{1.5}{$\Pi$}}_{\Omega}
    \Pi_{\Omega}
    \Bigg(\frac{\sum_{j \in \mathcal{N}_{i}} (|h_{ji}|\sqrt{\alpha_{j} P_{j}}\mathbf{x}_{j,t}+|h_{ji}|\sqrt{\beta_{j} P_{j}}\boldsymbol{\eta}_{j,t})}{c_{i} (d_{i}+1)} \\
    &+ \Big(\frac{1} {d_{i}+1}\Big) \Big(\mathbf{x}_{i,t} + \sqrt{\frac{\beta_{i}}{\alpha_{i}}} \boldsymbol{\eta}_{i,t}\Big)- \gamma_{t}\frac{\mathbf{g}_{i,t}}{z_{ii,t}}\Bigg),
\end{aligned}    
\end{equation}
where $\gamma_t$ denotes the time-decreasing learning rate and $\mathbf{g}_{i,t}$ represents the local stochastic gradient of the objective function at node $i$ evaluated at $\mathbf{x}_{i,t}$. 

Additionally, the term $z_{ii,t}$ represents the $i$-th component of the auxiliary vector $\mathbf{z}_{i,t}$, which is exchanged together with $\mathbf{x}_{i,t}$ between the neighboring nodes. As shown in~\cite{MAI201994}, we have $z_{ii,t}>0$ converging to $\pi_i$, the $i$-th entry of the left Perron eigenvector of the adjacency matrix. Since $z_{ii,t}>0$ for all $i$ and $t$, there exists a finite constant $\theta$ such that 
\begin{equation}
\label{eq:z_bound}
\frac{1}{z_{ii,t}} \le \theta, \quad \forall i \in \mathcal{V}, \; \forall t \ge 0.
\end{equation}

Since $\mathbf{z}_{i,t}$ contains only $K$ scalar entries, its $O(K)$ communication overhead is relatively small compared with the cost of transmitting high-dimensional model updates, particularly for the network sizes considered in this paper. Moreover, because $\mathbf{z}_{i,t}$ does not depend on the underlying raw
data, its transmission does not introduce any privacy leakage. Furthermore, to guarantee convergence of the iterative update, the compensation term $c_{i}$ is chosen as
\begin{equation}\label{eq:c_i}
     c_{i}
    = \frac{1}{d_i}
      \sum_{j \in \mathcal{N}_{i}}
      |h_{ji}| \sqrt{\alpha_{j} P_j}.
\end{equation}

With this choice of $c_i$, the effective coefficients assigned to the neighboring models sum to $d_i/(d_i+1)$, while the self-weight is $1/(d_i+1)$, ensuring that the resulting mixing weights sum to one. The normalization factor $c_i$ is computed locally at each node. Specifically, node $i$ estimates the channel magnitudes ${|h_{ji}|}_{j\in\mathcal{N}_i}$ through standard pilot-based channel estimation. Therefore, since the transmit powers ${P_j}$ and scaling coefficients ${\alpha_j}$ are predetermined offline-configured system parameters (or exchanged during an initialization phase), node $i$ can compute its own $c_i$ using only locally available information. So, the proposed framework does not require any centralized computation or additional coordination during the iterative learning process.

Using \cref{eq:c_i} for calculating $c_{i}$, the update procedure in \cref{eq:node-update-rule} can be rewritten in the more compact form
\begin{equation}\label{eq:update-procedure}
\begin{aligned}
\mathbf{x}_{i,t+1} = 
%\raisebox{0pt}{\scalebox{1.5}{$\Pi$}}_{\Omega}
\Pi_{\Omega}
\Big(\sum_{j=1}^{K} a_{ij} \big(\mathbf{x}_{j,t} + \sqrt{\tfrac{\beta_j}{\alpha_j}}\boldsymbol{\eta}_{j,t}\big) - \gamma_t\frac{\mathbf{g}_{i,t}}{z_{ii,t}} \Big),
\end{aligned}
\end{equation}
where the weights
\begin{equation*}
a_{ij} = \frac{|h_{ji}|\sqrt{\alpha_j P_j}}{c_i (d_{i}+1)} \text{ for }i \neq j, \text{ and }
a_{ii} = \frac{1}{d_{i}+1},
\end{equation*}
are exactly those that make $A$ row-stochastic, meaning that $\sum_{j=1}^{K} a_{ij} = 1 \quad, \forall  i \in \mathcal{V}$.

More precisely, node $i$ computes $\mathbf{x}_{i,t+1}$ using a combination of three terms: 
(i) the OAC-aggregated neighbor information, 
(ii) its own noise-perturbed model coefficients, and 
(iii) a gradient descent adjustment scaled by the auxiliary variable~$z_{ii,t}$. 

The auxiliary variables follow the linear update rule
\begin{equation}\label{eq:z_i-update-rule}
\mathbf{z}_{i,t+1} = \sum_{j=1}^K a_{ij} \mathbf{z}_{j,t},    
\end{equation}
which is initialized as $\mathbf{z}_{i,0} = \mathbf{e}_i$, the $i$-th standard unit vector, i.e., a one-hot vector with its $i$-th entry equal to $1$. 
Under repeated linear updates, the $i$-th component $z_{ii,t}$ converges to $\pi_i$, the corresponding entry of the left Perron eigenvector. 

The algorithm operates in an iterative manner until convergence, and its overall procedure is summarized in \cref{alg:main}.

To analyze the behavior of the decentralized update rule and establish convergence guarantees, we introduce several standard assumptions. These assumptions specify the properties of the local objective functions, the constraint set, and the gradient bounds used in the update steps. 
\begin{assumption}\label{ass:strong-convexity}
Each local cost function $f_i$,
is \( \mu \)-strongly convex. That is, for any \( x, y \in \Omega \),
\[
f_i(y) \ge f_i(x) + \nabla f_i(x)^\top (y - x) 
+ \frac{\mu}{2}\|y - x\|^2,
\]
where \( \mu > 0 \) denotes the strong convexity constant. 
\end{assumption}

Strong convexity ensures that each \( f_i \) has a unique minimizer and provides curvature that leads to faster convergence of gradient-based 
algorithms. While the theoretical analysis assumes strong convexity, the experiments demonstrate empirical robustness of the algorithm beyond this regime (see \cref{sec:sim-results}), consistent with common practice in DL literature.

\begin{assumption}\label{ass:feasible-set}
The constraint set $\Omega$ is nonempty, convex, and closed, and $\mathbf{0}\in\Omega$. Furthermore, its diameter is bounded by a constant $L<\infty$, such that
$\|\mathbf{x}-\mathbf{y}\|\le L, \forall\,\mathbf{x},\mathbf{y}\in\Omega$.
Throughout this paper, we assume that
$L\le\sqrt{m}$.
\end{assumption}
Convexity guarantees that the constraint region contains all convex 
combinations of its points, closedness ensures the existence of 
solutions within \( \Omega \), and boundedness prevents iterates from 
escaping to infinity.

\begin{assumption}\label{ass:bounded-gradients}
For every node $i$, every iteration \( t \ge 0 \), 
and any \( x \in \Omega \), the stochastic gradient of the local loss function $\mathbf{g}_{i,t}$,
is uniformly bounded. Specifically, there exists a constant \( G > 0 \) 
such that
$\|\mathbf{g}_{i,t}\| \le G$.
\end{assumption}

This assumption prevents gradient magnitudes from exploding, ensures 
stability in the update process, and is essential for deriving 
convergence and regret bounds in DL methods. 

To evaluate the learning performance over a finite horizon of $T$ iterations, we use the following cumulative regret function \cite{6930789}:
\begin{equation}\label{eq:regret-fun}
\begin{aligned}
\mathbb{R}_{j}(T)
&\triangleq
\sum_{t=1}^{T}
\left(
F(\mathbf{x}_{j,t})
-
F(\mathbf{x}^{*})
\right) \\
&=
\sum_{t=1}^{T}
\left(
\sum_{i=1}^{K}
f_i(\mathbf{x}_{j,t})
-
\sum_{i=1}^{K}
f_i(\mathbf{x}^{*})
\right).
\end{aligned}
\end{equation}

This  measures how much worse node $j$ performs compared to the optimal solution $\mathbf{x}^*$ that has full knowledge of all datasets. Although problem \cref{eq:optim-problem} is formulated as a static offline optimization problem, the iterative decentralized updates involving stochastic gradients, exchanged model coefficients, and time-varying DP perturbations inherently generate a sequential learning process across iterations. Therefore, although regret is commonly used in online optimization, it can also be used to analyze the optimization process of iterative offline algorithms. Accordingly, the cumulative regret in~\cref{eq:regret-fun} quantifies the accumulated optimization error and characterizes convergence over the training horizon, rather than modeling an online optimization problem.

%%%%%%%%%%%%%%%%%%%%
\begin{algorithm}[t]
\caption{Joint Power-Privacy Controlled DP-DL}
\begin{algorithmic}[1] % The '[1]' enables line numbers
\State \textbf{Inputs:} Constraint set $\Omega$, number of iterations $T$, the power budgets of the nodes $P_{i}$, the prescribed maximum privacy threshold $\epsilon_{\max}$, and learning rate $\{\gamma_t\}_{t=1}^T$ 
\State \textbf{Initialization:} Initialize $\mathbf{x}_{i,0} \in \Omega$ and set $\mathbf{z}_{i,0} = \mathbf{e}_i, \forall i \in \mathcal{V}$ and solve the maximization problem \cref{eq:alpha-LP} to find optimal $\alpha_{i}$ and $\beta_{i}$
\For{each iteration $t = 0, 1, 2, \ldots, T-1$}
    \For{each node $i \in \mathcal{V}$ in parallel}
        \State Generate noise $\boldsymbol{\eta}_{i,t} \sim \mathcal{N}(\mathbf0,\sigma_{i,t}^2\mathbf I_m)$
        \State Construct $\tilde{\mathbf{x}}_{i,t}=\sqrt{\alpha_{i} P_{i}}\mathbf{x}_{i,t} + \sqrt{\beta_{i} P_{i}}\boldsymbol{\eta}_{i,t}$
        \State Multicast  $\tilde{\mathbf{x}}_{i,t}$ and $\mathbf{z}_{i,t}$ to all neighbors $j \in \mathcal{N}_i$
        \Statex \hspace{1cm} over the channels $h_{ij}$ and compute the resulting
        \Statex \hspace{1cm} aggregated $\mathbf{y}_{i,t}$ using \cref{eq:received-signal}  
        \State Update $\mathbf{x}_{i,t+1}$ using \cref{eq:node-update-rule}
        \State Update $\mathbf{z}_{i,t+1}$ using \cref{eq:z_i-update-rule}
    \EndFor
\EndFor
\State \textbf{Outputs:} Model coefficients $\{\mathbf{x}_{i,t}\}_{i \in \mathcal{V}}$ for $t=1,2,...,T$
\end{algorithmic}
\label{alg:main}
\end{algorithm}
%%%%%%%%%%%%%%%%%%%%
% -------------------------------------------------------------
\section{Theoretical Analysis}\label{sec:theoretical-analysis}

% -----------------------------------------------------------

In this section, we establish the privacy guarantees and analyze the convergence of the proposed algorithm under Gaussian DP noise and heterogeneous network settings.
\subsection{Privacy}
Our first major result, stating the $(\epsilon_{ij},\delta)$-DP for each node is given as follows. 
\begin{theorem}\label{thm:privacy}
The proposed algorithm achieves $(\epsilon_{ij}, \delta)$-DP for node $i$ with respect to its neighbor $j$ at each iteration $t$, where $\epsilon_{ij}$ is given by
\begin{equation}\label{eq:e_ij-thm}
\begin{aligned}
    \epsilon_{ij}
    % = \frac{ \Delta_{ij}^{t}}{\sigma_{i,t}^{r}}\sqrt{2 \ln \frac{1.25}{\delta}}\\
    =\frac{2G \gamma_t \theta |h_{ji}|\sqrt{\alpha_j P_j}}
    {\sqrt{\sum_{k \in \mathcal{N}_i} |h_{ki}|^2 \beta_k P_k \sigma_{k,t}^2}}
    \sqrt{2 \ln \frac{1.25}{\delta}}.
\end{aligned}
\end{equation}   
% where $\sigma_{i,t}^{r}$ denotes the standard deviation of the effective noise received at node $i$ during iteration $t$.
\end{theorem}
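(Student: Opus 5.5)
The plan is to view the signal $\mathbf{y}_{i,t}$ received by node $i$ at iteration $t$ as the output of a Gaussian mechanism applied to the noiseless aggregate $f(\mathcal{D}_j)=\sum_{k\in\mathcal{N}_i}|h_{ki}|\sqrt{\alpha_kP_k}\,\mathbf{x}_{k,t}$. The argument then reduces to the standard calibration \cref{eq:sigma-DP}. The steps are: (i) identify the effective Gaussian noise seen by node $i$ and its per-coordinate standard deviation; (ii) bound the $\ell_2$-sensitivity of the aggregate with respect to a change of a single sample in $\mathcal{D}_j$; (iii) solve \cref{eq:sigma-DP} for $\epsilon$.

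\textbf{Effective noise.} By \cref{eq:received-signal}, the random part of $\mathbf{y}_{i,t}$ is $\sum_{k\in\mathcal{N}_i}|h_{ki}|\sqrt{\beta_kP_k}\,\boldsymbol{\eta}_{k,t}$. The vectors $\boldsymbol{\eta}_{k,t}\sim\mathcal N(\mathbf0,\sigma_{k,t}^2\mathbf I_m)$ are independent across nodes, so their weighted sum is $\mathcal N(\mathbf 0,(\sigma^{r}_{i,t})^2\mathbf I_m)$ with
\[
(\sigma^{r}_{i,t})^2=\sum_{k\in\mathcal{N}_i}|h_{ki}|^2\beta_kP_k\sigma_{k,t}^2 .
\]
This gives the denominator in \cref{eq:e_ij-thm}. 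The injected noise is fresh at every iteration, so conditioning on the past iterates is legitimate for a per-iteration guarantee.

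\textbf{Sensitivity.} Changing one sample of $\mathcal{D}_j$ affects, at the current round, only the component $|h_{ji}|\sqrt{\alpha_jP_j}\,\mathbf{x}_{j,t}$. I would fix all past iterates and randomness, and compare the two runs that differ only in the last local gradient step at node $j$. By \cref{eq:update-procedure}, together with the non-expansiveness of $\Pi_\Omega$, the difference in $\mathbf{x}_{j,t}$ is at most $\gamma\,\|\mathbf{g}_{j}-\mathbf{g}_{j}'\|/z_{jj}$. By \cref{ass:bounded-gradients} and \cref{eq:z_bound}, this is at most $2G\gamma_t\theta$. Hence
\[
\Delta_{ij}\le 2G\gamma_t\theta\,|h_{ji}|\sqrt{\alpha_jP_j}.
\]
Substituting $\Delta_f=\Delta_{ij}$ and $\sigma=\sigma^r_{i,t}$ into \cref{eq:sigma-DP} and solving for $\epsilon$ yields exactly \cref{eq:e_ij-thm}.

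\textbf{Main obstacle.} The hardest step is justifying the sensitivity bound: the index bookkeeping, with $\gamma_{t-1}$ versus $\gamma_t$, and the argument that earlier iterates can be treated as fixed. The latter holds because the per-iteration guarantee is conditional on the history, i.e.\ the mechanism at round $t$ takes the shared past as given, and only the freshly computed gradient depends on the changed sample. I would state this convention explicitly and, with $\gamma_t$ nonincreasing, absorb the step-size index into $\gamma_t$ as written.

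A second, milder concern is that $\Pi_\Omega$ and the self-noise term $\sqrt{\beta_j/\alpha_j}\boldsymbol{\eta}_{j,t}$ inside node $j$'s update do not enlarge this difference. Non-expansiveness handles the projection, and the self-noise term is common to both runs under coupling, so neither causes trouble.
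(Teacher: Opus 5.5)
Your overall route is the paper's proof. You combine the independent Gaussian noises into $\sigma^{(c)}_{i,t}=\big(\sum_{k\in\mathcal{N}_i}|h_{ki}|^2\beta_kP_k\sigma_{k,t}^2\big)^{1/2}$, bound the sensitivity through node $j$'s single fresh gradient step using $G$ and $\theta$, and invert the Gaussian-mechanism calibration. Your appeal to non-expansiveness of $\Pi_\Omega$ is cleaner than the paper's chain \eqref{eq:sensitivity-UB}, which writes an equality that ignores the projection. Treating the past as fixed is also the convention the paper uses implicitly.

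The one step that fails is your proposed fix for the index mismatch. Under the update rule as written, the fresh data-dependent term in $\mathbf{x}_{j,t}$ is $\gamma_{t-1}\mathbf{g}_{j,t-1}/z_{jj,t-1}$. The index on $z$ is harmless because $\theta$ bounds $1/z_{jj,\cdot}$ uniformly. The honest bound is therefore $\Delta_{ij}^t\le 2G\gamma_{t-1}\theta|h_{ji}|\sqrt{\alpha_jP_j}$.

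Since $\gamma_t$ is nonincreasing, $\gamma_{t-1}\ge\gamma_t$, so monotonicity points the wrong way. You cannot replace $\gamma_{t-1}$ by $\gamma_t$ in an upper bound on the sensitivity; doing so understates $\Delta_{ij}^t$ and hence $\epsilon_{ij}$.

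The paper does not get $\gamma_t$ from any such argument. It simply writes $\mathbf{x}_{j,t}(\mathcal{D}_j)-\mathbf{x}_{j,t}(\mathcal{D}_j')=\frac{\gamma_t}{z_{jj,t}}\big(\mathbf{g}_{j,t}(\mathcal{D}_j)-\mathbf{g}_{j,t}(\mathcal{D}_j')\big)$. That is, it adopts the convention that the round-$t$ mechanism is the release whose fresh gradient step uses $\gamma_t$ and whose DP noise has variance $\sigma_{k,t}^2$. To obtain the stated formula, state that indexing convention explicitly: each transmission's noise is indexed by the gradient step it follows.

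If you keep the literal indexing, the correct result has $\gamma_{t-1}$ in the numerator. With $\sigma_{k,t}=\rho_k\gamma_t$, the time-invariance of $\epsilon_{ij}$ that the paper relies on afterwards then holds only up to the factor $\gamma_{t-1}/\gamma_t$, for example $t/(t-1)$ when $\gamma_t=1/(\mu\theta t)$.
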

\begin{IEEEproof}
    To quantify the DP guarantee of the proposed algorithm, we analyze how much the received signal at node $i$ can change when a single data point in the dataset of node $j$ is modified. Considering two neighboring datasets $\mathcal{D}_j$ and $\mathcal{D}_j'$ at node $j$, and applying the sensitivity definition in \cref{eq:sensitivity-def} together with \cref{eq:received-signal} and \cref{eq:node-update-rule}, the resulting sensitivity can be upper-bounded as
\begin{equation}\label{eq:sensitivity-UB}
\begin{aligned}
    \Delta_{ij}^{t}&= \max_{\mathcal{D}_{j},\mathcal{D}_j'} \|\mathbf{y}_{i,t}(\mathcal{D}_{j})-\mathbf{y}_{i,t}(\mathcal{D}_j')\|\\
    &= \max_{\mathcal{D}_{j},\mathcal{D}_j'} \||h_{ji}|\sqrt{\alpha_{j}P_{j}} (\mathbf{x}_{j,t}(\mathcal{D}_{j})-\mathbf{x}_{j,t}(\mathcal{D}_j'))\|\\
    &= \max_{\mathcal{D}_{j},\mathcal{D}_j'} \||h_{ji}|\sqrt{\alpha_{j}P_{j}} \frac{\gamma_{t}}{z_{jj,t}} (\mathbf{g}_{j,t}(\mathcal{D}_{j})-\mathbf{g}_{j,t}(\mathcal{D}_j'))\|\\
    &\leq 2G\gamma_{t}\theta |h_{ji}|\sqrt{\alpha_{j}P_{j}}.
\end{aligned}    
\end{equation}
% where $G$ bounds all gradient norms and $\theta$ upper bounds $1/z_{ii,t}$, $\forall i \in \mathcal{V}$ and all $t \ge 0$. 

On the other hand, the noise observed at receiver $i$ at time $t$ results from the superposition of independently injected Gaussian noise signals transmitted by its neighboring nodes over the wireless channels. Specifically, each neighbor $k \in \mathcal{N}_i$ adds a zero-mean Gaussian noise vector $\boldsymbol{\eta}_{k,t} \sim \mathcal{N}(\mathbf{0}, \sigma_{k,t}^2 \mathbf{I})$ to its local model coefficients. During transmission, this noise is scaled by $\sqrt{\beta_kP_k}$ and the channel magnitude $|h_{ki}|$. The aggregate noise masks each neighbor's contribution, thereby providing privacy at node $i$. Accordingly, the standard deviation of the combined noise at node $i$ is given by
\begin{equation*}
 \sigma_{i,t}^{(c)}=\sqrt{\sum_{k \in \mathcal{N}_i} |h_{ki}|^2 \beta_k P_k \sigma_{k,t}^2}.   
\end{equation*}

By substituting $\Delta_{ij}^t$ and $\sigma_{i,t}^{(c)}$ into the Gaussian mechanism in \cref{eq:sigma-DP}, we obtain the statement of the theorem.

\end{IEEEproof}

To facilitate an efficient solution, we show that \eqref{eq:alpha-LP} can be reformulated as an LP problem. Although the privacy constraints are initially nonlinear due to the square-root ratios, they become linear inequalities under the parameter setting used in this framework.

From~\cref{thm:privacy}, the privacy constraint is given by
\[
\frac{
2G\gamma_t\theta |h_{ji}|\sqrt{\alpha_jP_j}
}{
\sqrt{
\sum_{k\in\mathcal{N}_i}
|h_{ki}|^2\beta_kP_k\sigma_{k,t}^{2}
}
}
\sqrt{
2\ln\left(\frac{1.25}{\delta}\right)
}
\leq
\epsilon_{\max}.
\]

Assuming $\sigma_{k,t}=\rho_k\gamma_t$,
where $\rho_k>0$ is a constant, the common factor $\gamma_t$ appearing in both the numerator and denominator cancels. Finally, substituting $\beta_k=1-\alpha_k$ yields
\[
\begin{aligned}
&
8G^2\theta^2|h_{ji}|^2P_j
\ln\!\left(\frac{1.25}{\delta}\right)\alpha_j
+
\epsilon_{\max}^2
\sum_{k\in\mathcal{N}_i}
|h_{ki}|^2P_k\rho_k^2\alpha_k
\\
&\hspace{5em}\leq
\epsilon_{\max}^2
\sum_{k\in\mathcal{N}_i}
|h_{ki}|^2P_k\rho_k^2,
\end{aligned}
\]
which is a linear inequality in the optimization variables $\{\alpha_k\}$. Since the objective function is also linear, the optimization problem in \eqref{eq:alpha-LP} is a linear programming problem.

%%%%%%%%%%%%%%%%%%%%%%%%%%%%%%%%%%%%%%%%%%%
\subsection{Privacy Composition and Accounting}
Theorem~\ref{thm:privacy} provides a per-iteration local DP guarantee. Although cumulative leakage can be bounded using standard or advanced composition~\cite{dwork2014algorithmic}, tighter accounting can be obtained using R\'enyi differential privacy (RDP) with privacy amplification by Poisson subsampling. If each data point is sampled independently with probability $q$, an $(\epsilon_{ij},\delta)$-DP mechanism becomes $\left(\log\left(1+q(e^{\epsilon_{ij}}-1)\right),q\delta\right)$-DP~\cite{Balle2018PrivacyAmplification}. We therefore use the corresponding subsampled RDP accountant, denoting its order-$\lambda$ parameter by $\epsilon_{ij}^{\mathrm{sub}}(\lambda)$. According to Theorem~9 in~\cite{DBLP:journals/corr/abs-1808-00087},
\[
\begin{aligned}
\epsilon_{ij}^{\mathrm{sub}}(\lambda)
&\leq \frac{1}{\lambda-1}
\ln\Bigg[
1+q^2\binom{\lambda}{2}
\min\!\left\{4\!\left(e^{\epsilon_{ij}(2)}-1\right),
2e^{\epsilon_{ij}(2)}\right\}
\\
&+
\sum_{k=3}^{\lambda}
2q^k\binom{\lambda}{k}
\exp\!\left((k-1)\epsilon_{ij}(k)\right)
\Bigg],
\end{aligned}
\]
where $\epsilon_{ij}(k) =\frac{k}{2}\left(\frac{\Delta_{ij}^{t}}{\sigma_{i,t}^{(c)}} \right)^2$
denotes the RDP parameter of the original (non-subsampled) Gaussian mechanism at order $k$.

By the additive composition property of RDP~\cite{DBLP:journals/corr/Mironov17}, the cumulative privacy leakage after $T$ communication rounds is
$\epsilon_{ij}^{\mathrm{tot}}(\lambda)
=
\sum_{t=1}^{T}
\epsilon_{ij,t}^{\mathrm{sub}}(\lambda)$.
Since the ratio
$\Delta_{ij}^{t}/\sigma_{i,t}^{(c)}$
is independent of the communication round, the per-iteration RDP parameter is also time-invariant, i.e.,
$\epsilon_{ij,t}^{\mathrm{sub}}(\lambda)=\epsilon_{ij}^{\mathrm{sub}}(\lambda)$.
Therefore,
$\epsilon_{ij}^{\mathrm{tot}}(\lambda)
=
T\,\epsilon_{ij}^{\mathrm{sub}}(\lambda)$.
Finally, the corresponding $(\epsilon,\bar{\delta})$-DP guarantee is obtained by
\[
\epsilon_{ij,\mathrm{sub}}^{\mathrm{RDP}}
=
\min_{\lambda>1}
\left\{
\epsilon_{ij}^{\mathrm{tot}}(\lambda)
+
\frac{\ln(1/\bar{\delta})}{\lambda-1}
\right\},
\]
where $\bar{\delta}>0$ is the target failure probability.

% ===========
\subsection{Convergence Rate}
We use the regret function in \cref{eq:regret-fun} to characterize cumulative error and convergence. The analysis addresses two main challenges: DP noise perturbations and row-stochastic mixing caused by asymmetric channel gains. We first introduce the following lemma:
\begin{lemma}[\!\cite{MAI201994}]\label{lem:convergence-to-pi}
Let $A$ be a given row-stochastic adjacency matrix associated with a fixed and
strongly connected communication graph $\mathcal{G}$. Let $\boldsymbol{\pi}$ be
the unique left Perron eigenvector of $A$, satisfying 
$\boldsymbol{\pi}^{\top} A = \boldsymbol{\pi}^{\top}$, $\boldsymbol{\pi}^{\top} \mathbf{1}_{n} = 1$ and $\sum_{i} \pi_{i} = 1$.  
Under \cref{ass:strong-convexity,ass:feasible-set}, there exist constants $C>0$ and $0<\xi<1$ such that
for all $i,j \in \mathcal{V}$ and all $t \ge 0$,
\begin{equation*}\label{eq:convergence-to-pi}
    \big| [A^{t}]_{ij} - \pi_{j} \big| \le C \, \xi^{t},
    \qquad
    \big| z_{ii,t} - \pi_{i} \big| \le C \, \xi^{t}.
\end{equation*}
\end{lemma}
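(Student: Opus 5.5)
The plan is to prove the lemma by a purely linear-algebraic Perron--Frobenius argument. \cref{ass:strong-convexity,ass:feasible-set} play no role; only the structure of $A$ from \cref{sec:algorithm} is used. The argument has three steps: show that $A$ is primitive, obtain geometric convergence of $A^t$ to $\mathbf{1}\boldsymbol{\pi}^\top$, and identify $z_{ii,t}$ with a diagonal entry of $A^t$.

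\emph{Step 1 (primitivity).} By construction $A$ is nonnegative and row-stochastic, with $a_{ij}>0$ exactly when $j\in\mathcal{N}_i\cup\{i\}$. Strong connectivity of $\mathcal{G}$ makes $A$ irreducible. The self-weights $a_{ii}=1/(d_i+1)>0$ make $A$ aperiodic. Concretely, for any $i,j$ there is a path of length at most $K-1$ from $i$ to $j$, and it can be padded with self-loops, so $A^{K-1}$ is entrywise positive. Perron--Frobenius then gives the following facts:
\begin{itemize}
\item $1$ is a simple eigenvalue of $A$, with right eigenvector $\mathbf{1}$ and a unique left eigenvector $\boldsymbol{\pi}>0$ normalized by $\boldsymbol{\pi}^\top\mathbf{1}=1$;
\item every other eigenvalue satisfies $|\lambda|\le|\lambda_2|<1$.
\end{itemize}

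\emph{Step 2 (geometric rate).} Set $B=A-\mathbf{1}\boldsymbol{\pi}^\top$. From $A\mathbf{1}=\mathbf{1}$, $\boldsymbol{\pi}^\top A=\boldsymbol{\pi}^\top$ and $\boldsymbol{\pi}^\top\mathbf{1}=1$, one checks $A\,\mathbf{1}\boldsymbol{\pi}^\top=\mathbf{1}\boldsymbol{\pi}^\top A=(\mathbf{1}\boldsymbol{\pi}^\top)^2=\mathbf{1}\boldsymbol{\pi}^\top$. An induction then gives $A^t-\mathbf{1}\boldsymbol{\pi}^\top=B^t$ for $t\ge1$.

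The spectrum of $B$ is that of $A$ with the eigenvalue $1$ replaced by $0$, so its spectral radius is $|\lambda_2|<1$. Fix any $\xi\in(|\lambda_2|,1)$. Gelfand's formula (or a Jordan form bound) yields a constant $C'$ with $\|B^t\|_\infty\le C'\xi^t$ for all $t$. Taking $C=\max\{C',1\}$ also covers $t=0$, because $|[I]_{ij}-\pi_j|\le1$. This gives $|[A^t]_{ij}-\pi_j|\le C\xi^t$ for all $i,j$ and all $t\ge0$.

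An alternative route gives explicit constants: apply the Dobrushin/Seneta ergodicity coefficient to the positive matrix $A^{K-1}$, whose contraction factor is $1-K\min_{ij}[A^{K-1}]_{ij}<1$.

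\emph{Step 3 (auxiliary variables).} Stack the vectors $\mathbf{z}_{i,t}^\top$ as the rows of a matrix $Z_t$. The update \cref{eq:z_i-update-rule} reads $Z_{t+1}=AZ_t$, and the initialization is $Z_0=I$. Hence $Z_t=A^t$, so $z_{ii,t}=[A^t]_{ii}$, and the second inequality is the special case $j=i$ of Step 2.

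As a by-product, $z_{ii,t}\ge a_{ii}^t>0$ for every $t$. Combined with $z_{ii,t}\to\pi_i>0$, this shows that $1/z_{ii,t}$ is uniformly bounded, which justifies the constant $\theta$ in \cref{eq:z_bound}.

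The only delicate point is securing a spectral gap with uniform constants, i.e.\ primitivity rather than mere irreducibility. The positive self-weights $a_{ii}$ handle this, and I would make that explicit rather than rely on the informal assertion of convergence.
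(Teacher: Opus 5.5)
Your proof is correct, and it supplies something the paper does not: the paper gives no argument for this lemma and simply imports it from \cite{MAI201994}. Your route is the standard Perron--Frobenius derivation behind that citation:
\begin{itemize}
\item primitivity of $A$;
\item the identity $A^t-\mathbf{1}\boldsymbol{\pi}^\top=(A-\mathbf{1}\boldsymbol{\pi}^\top)^t$ for $t\ge 1$, where the right-hand side has spectral radius $|\lambda_2|<1$;
\item $Z_t=A^t$, which follows from $Z_0=I$.
\end{itemize}
So the comparison is between a citation and a self-contained proof. The citation buys brevity. Your version makes the hypotheses explicit, as follows.

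First, it shows that \cref{ass:strong-convexity,ass:feasible-set} play no role. What the lemma really needs is aperiodicity, which the paper's $A$ gets from the self-weights $a_{ii}=1/(d_i+1)>0$. As literally phrased, for an arbitrary row-stochastic $A$ on a strongly connected graph, the lemma is false. For example, for the $2\times 2$ swap matrix, $[A^t]_{11}$ alternates between $1$ and $0$ while $\pi_1=1/2$.

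Second, it shows that $a_{ij}>0$ for $j\in\mathcal{N}_i$ requires $\alpha_j>0$ and $|h_{ji}|>0$. The feasible set of \eqref{eq:alpha-LP} allows $\alpha_j=0$. So instead of writing ``by construction,'' state this as a standing assumption; the update rule already presupposes it, since it divides by $\alpha_j$.

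Third, your by-product $z_{ii,t}=[A^t]_{ii}\ge a_{ii}^t>0$, together with $z_{ii,t}\to\pi_i>0$, is what actually justifies the constant $\theta$ in \eqref{eq:z_bound}. The paper states this without a complete argument: positivity of each $z_{ii,t}$ alone would not give a uniform bound.

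Finally, your Dobrushin variant applied to $A^{K-1}$ yields explicit $C$ and $\xi$. This is worth keeping, since both constants enter $U_1$ and $U_2$ in \cref{Theorem2_convergence}.
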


Building upon this lemma, the main convergence theorem is now established as follows:
\begin{theorem}\label{Theorem2_convergence}
Under \cref{ass:strong-convexity,ass:feasible-set,ass:bounded-gradients} and using \cref{lem:convergence-to-pi}, the proposed algorithm achieves an expected regret of order $O(\log T)$.
In particular, by choosing a diminishing step size $\gamma_t = 1/(\mu \theta t)$ with constants $\mu > 0$ and $\theta > 0$, the expected regret of the algorithm can be upper-bounded as follows:
\begin{equation*}
\begin{aligned}
    \mathbb{E}[\mathbb{R}_{j}(T)] \leq U_{1} + U_{2}(1+\log T), 
\end{aligned}   
\end{equation*}
where $U_1$ and $U_2$ are defined as
\begin{equation*}
\begin{aligned}
    &U_{1} = \frac{\xi C G}{1-\xi} \Big (2(K+\theta) \sum_{i=1}^{K}\|x_{i,0}\| + K\theta L \Big ), \\
    &U_{2} = O\left(K^{3} m G^{2} \ln \frac{1.25}{\delta}\right)
    % &U_{2} =  \frac{2 S (8K^{2}+8\theta K + \theta )\sqrt{m} G^{2}  \sqrt{2 \ln \frac{1.25}{\delta}} }{\min_{i,j,k} \Big( \epsilon_{ij} \sqrt{\sum_{k \in \mathcal{N}_i} |h_{ki}|^2 \beta_k P_k }\Big)} \\
    % &+ \frac{4 \mu S^{2} m G^{2} \theta \ln (\frac{1.25}{\delta})}{\min_{i,j,k} \Big( \epsilon_{ij} \sqrt{\sum_{k \in \mathcal{N}_i} |h_{ki}|^2 \beta_k P_k }\Big)} \\ 
    % &+ \frac{4 S (2K+1)(\theta + K) K \sqrt{m} C G^{2} \sqrt{2 \ln \frac{1.25}{\delta}}}{(1-\xi)\min_{i,j,k} \Big( \epsilon_{ij} \sqrt{\sum_{k \in \mathcal{N}_i} |h_{ki}|^2 \beta_k P_k }\Big)}\\
    % &+ \frac{2K (K + \theta) C G^{2}}{1-\xi} + \frac{1}{2}(8K+9\theta)G^{2}
\end{aligned}   
\end{equation*}
\end{theorem}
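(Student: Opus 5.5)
The plan follows the standard analysis of projected subgradient-push or row-stochastic DSGD with perturbations (as in \cite{9013030,MAI201994}). The idea is to compare each local iterate with a $\boldsymbol{\pi}$-weighted network average and then split the regret into an optimization term for that average and a consensus/disagreement term.

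\textbf{Step 1: rewrite the update as a perturbed linear system.} Write \cref{eq:update-procedure} as
\[
\mathbf{x}_{i,t+1}=\sum_j a_{ij}\mathbf{x}_{j,t}+\mathbf{w}_{i,t}-\gamma_t\mathbf{g}_{i,t}/z_{ii,t}+\mathbf{e}_{i,t}.
\]
Here $\mathbf{w}_{i,t}=\sum_j a_{ij}\sqrt{\beta_j/\alpha_j}\boldsymbol{\eta}_{j,t}$ is zero-mean DP noise, and $\mathbf{e}_{i,t}$ is the projection error. Because $\mathbf{0}\in\Omega$, $\Pi_\Omega$ is non-expansive, and \cref{ass:bounded-gradients} holds, $\|\mathbf{e}_{i,t}\|\le\|\mathbf{w}_{i,t}\|+\theta\gamma_t G$.

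Unrolling the recursion gives
\[
\mathbf{x}_{i,t}=\sum_j[A^t]_{ij}\mathbf{x}_{j,0}+\sum_{s<t}\sum_j[A^{t-1-s}]_{ij}\big(\mathbf{w}_{j,s}+\mathbf{e}_{j,s}-\gamma_s\mathbf{g}_{j,s}/z_{jj,s}\big).
\]
Define the averaged sequence $\bar{\mathbf{x}}_t=\sum_i\pi_i\mathbf{x}_{i,t}$, which evolves with $\pi_j$ in place of $[A^{t-1-s}]_{ij}$. By \cref{lem:convergence-to-pi},
\[
\|\mathbf{x}_{i,t}-\bar{\mathbf{x}}_t\|\le C\xi^t\sum_j\|\mathbf{x}_{j,0}\|+C\sum_{s<t}\xi^{t-1-s}\sum_j\big(\|\mathbf{w}_{j,s}\|+\|\mathbf{e}_{j,s}\|+\theta\gamma_sG\big).
\]
Taking expectations, $\mathbb{E}\|\boldsymbol{\eta}_{j,s}\|\le\sqrt{m}\,\sigma_{j,s}=\sqrt{m}\,\rho_j\gamma_s$. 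Since $\sigma\propto\gamma$, every noise term is $O(\gamma_s)$. The constants $\rho_j$ are expressed through $\epsilon_{ij}$ using \cref{thm:privacy}, which is where the factor $\sqrt{\ln(1.25/\delta)}$ enters. Summing over $t$ and swapping sums, $\sum_t\sum_{s<t}\xi^{t-1-s}\gamma_s\le\frac{1}{1-\xi}\sum_s\gamma_s=O(\log T)$. The initial-condition part sums to $\frac{\xi C}{1-\xi}\sum_i\|\mathbf{x}_{i,0}\|$, which produces $U_1$ once multiplied by the Lipschitz constant $G$ of each $f_i$ and summed over $K$ functions. The $z_{ii,t}$ versus $\pi_i$ mismatch contributes the extra $\theta$ factors and the $K\theta L$ term.

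\textbf{Step 2: one-step analysis of $\bar{\mathbf{x}}_t$.} Expand
\[
\|\bar{\mathbf{x}}_{t+1}-\mathbf{x}^*\|^2=\|\bar{\mathbf{x}}_t-\mathbf{x}^*\|^2-2\gamma_t\sum_i\frac{\pi_i}{z_{ii,t}}\mathbf{g}_{i,t}^\top(\bar{\mathbf{x}}_t-\mathbf{x}^*)+(\text{noise, projection, }\gamma_t^2G^2\theta^2\text{ terms}).
\]
The cross terms with zero-mean $\boldsymbol{\eta}$ vanish in expectation, and the squared noise terms are $O(m\gamma_t^2\ln(1.25/\delta))$. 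Replace $\pi_i/z_{ii,t}$ by $1$ at cost $C\xi^t\theta$, which is summable. Move $\mathbf{g}_{i,t}$ from $\mathbf{x}_{i,t}$ to $\bar{\mathbf{x}}_t$ at a cost of $G\|\mathbf{x}_{i,t}-\bar{\mathbf{x}}_t\|$, controlled in Step 1. Then apply strong convexity (\cref{ass:strong-convexity}):
\[
\mathbf{g}^\top(\bar{\mathbf{x}}_t-\mathbf{x}^*)\ge F(\bar{\mathbf{x}}_t)-F(\mathbf{x}^*)+\tfrac{K\mu}{2}\|\bar{\mathbf{x}}_t-\mathbf{x}^*\|^2.
\]
Divide by $2\gamma_t$ and sum over $t$. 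With $\gamma_t=1/(\mu\theta t)$, the coefficients of $\|\bar{\mathbf{x}}_t-\mathbf{x}^*\|^2$ form the usual telescoping series $\frac{1}{2\gamma_t}-\frac{1}{2\gamma_{t-1}}-\frac{\mu}{2}\le0$, up to the $\theta$ scaling. What remains is $\sum_t\gamma_t\cdot O(K^2mG^2\ln(1.25/\delta))=O(\log T)$.

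\textbf{Step 3: assemble the bound.} Finally, $F(\mathbf{x}_{j,t})-F(\bar{\mathbf{x}}_t)\le KG\|\mathbf{x}_{j,t}-\bar{\mathbf{x}}_t\|$, and this is reused from Step 1. Collecting terms gives $U_1+U_2(1+\log T)$, since $\sum_{t\le T}1/t\le1+\log T$. The $K^3$ in $U_2$ arises from a factor $K$ from summing the $f_i$, a factor $K$ from summing the disagreement over $j$, and a factor $K$ from the neighbor-noise sums.

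\textbf{Main obstacle.} The hardest step is the telescoping in Step 2. The step size contains $\theta$, but the effective curvature weight is $\pi_i/z_{ii,t}$, so the strong-convexity term has to dominate $\frac{1}{\gamma_t}-\frac{1}{\gamma_{t-1}}=\mu\theta$. My first attempt would be to bound $\pi_i/z_{ii,t}\ge1-C\theta\xi^t$. I would then absorb the finitely many early iterations where this fails into the constant term, using the bounded diameter $L$ from \cref{ass:feasible-set}. If that is too lossy, I would instead run the telescoping argument on $\sum_i\pi_i\|\mathbf{x}_{i,t}-\mathbf{x}^*\|^2$ directly, as in \cite{9013030}.
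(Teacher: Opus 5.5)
\textbf{Gap 1: the telescoping step.} The decisive gap is the step you flag as the main obstacle. Your proposed fix cannot close it, because the mismatch is not transient.

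The constant $\theta$ must bound $1/z_{ii,t}$ for all $t$, and $z_{ii,t}\to\pi_i$. Hence $\theta\ge\max_i 1/\pi_i\ge K$, strictly whenever $\boldsymbol{\pi}$ is non-uniform. In your averaged recursion the curvature weight is $\frac{\mu}{2}\sum_i\pi_i/z_{ii,t}\to\frac{K\mu}{2}$, while $\frac{1}{2\gamma_t}-\frac{1}{2\gamma_{t-1}}=\frac{\mu\theta}{2}$. The net coefficient of $\|\bar{\mathbf{x}}_t-\mathbf{x}^*\|^2$ therefore tends to $\frac{\mu(\theta-K)}{2}>0$ and stays positive for all large $t$. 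Discarding early rounds via $\pi_i/z_{ii,t}\ge 1-C\theta\xi^t$ still leaves an $O(\mu(\theta-K)L^2T)$ term.

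This is not just looseness in the bound. Take $f_i(\mathbf{x})=\frac{\mu}{2}\|\mathbf{x}-\mathbf{b}_i\|^2$. The average then obeys $\bar{\mathbf{x}}_{t+1}-\mathbf{x}^*\approx(1-\frac{K}{\theta t})(\bar{\mathbf{x}}_t-\mathbf{x}^*)$. For generic initial points the regret grows like $T^{1-2K/\theta}$ once $\theta>2K$, e.g.\ when some $\pi_i<1/(2K)$.

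Your fallback, the per-node function $\sum_i\pi_i\mathbb{E}\|\mathbf{x}_{i,t}-\mathbf{x}^*\|^2$, is exactly the paper's route. It does not rescue the stated step size either. The paper obtains the factor $(1-\mu\theta\gamma_t)$, which makes its telescoping exact, by multiplying $-\frac{\mu}{2}\|\mathbf{x}-\mathbf{x}_{i,t}\|^2$ by $2\theta\gamma_t$ instead of $2\gamma_t/z_{ii,t}$. Since $\theta$ is an \emph{upper} bound on $1/z_{ii,t}$, that inequality points the wrong way.

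The valid direction is $z_{ii,t}=[A^t]_{ii}\le 1$, i.e.\ $1/z_{ii,t}\ge 1$. This gives a curvature weight of at least $\mu$ in either recursion, so $\gamma_t=1/(\mu t)$ telescopes exactly. With that step size, the rest of your plan together with the projection fix below yields $O(\log T)$. That plan consists of noise of order $\gamma_t$ via $\sigma_{i,t}=\rho_i\gamma_t$, consensus via \cref{lem:convergence-to-pi}, and the $C\theta\xi^t$ correction for $\pi_i/z_{ii,t}$. None of this works with $\gamma_t=1/(\mu\theta t)$ as stated.

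\textbf{Gap 2: the projection error in Step~2.} Step~2 treats the projection error as if it entered only at second order. The cross term $2\langle\bar{\mathbf{x}}_t-\mathbf{x}^*,\sum_i\pi_i\mathbf{e}_{i,t}\rangle$ is first order in $\gamma_t$:
your bound only gives $\|\mathbf{e}_{i,t}\|=O(\gamma_t)$;
the term is not zero-mean, since the projection depends on $\boldsymbol{\eta}_{\cdot,t}$ and $\mathbf{g}_{i,t}$;
bounding $\|\bar{\mathbf{x}}_t-\mathbf{x}^*\|\le L$ leaves an $O(L)$ contribution per round after dividing by $2\gamma_t$, i.e.\ $O(T)$ regret.

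You need the obtuse-angle property $\langle\mathbf{e}_{i,t},\mathbf{x}^*-\mathbf{x}_{i,t+1}\rangle\ge 0$. It reduces the cross term to $\|\mathbf{e}_{i,t}\|\,\|\bar{\mathbf{x}}_t-\mathbf{x}_{i,t+1}\|$. By your Step~1 consensus bound, both factors are $O(\gamma_t)$ up to a geometrically decaying initial-condition part, so after division by $2\gamma_t$ the term sums to $O(\log T)$.

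The paper sidesteps this by working with per-node distances. Non-expansiveness of $\Pi_\Omega$ with $\mathbf{x}^*\in\Omega$ removes the projection outright. Then $\boldsymbol{\pi}^\top A=\boldsymbol{\pi}^\top$ collapses $\sum_i\pi_i\sum_j a_{ij}\mathbb{E}\|\mathbf{x}_{j,t}-\mathbf{x}\|^2$ back to the same Lyapunov function.

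A minor point: $\|\mathbf{e}_{i,t}\|\le\|\mathbf{w}_{i,t}\|+\theta\gamma_t G$ holds because $\sum_j a_{ij}\mathbf{x}_{j,t}\in\Omega$, not because $\mathbf{0}\in\Omega$.
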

The constants in Theorem~\ref{Theorem2_convergence} are based on worst-case system parameters and are therefore conservative; however, this affects only the constants and not the $O(\log T)$ regret rate. The big O-notation in $U_2$ captures the cumulative contribution of injected DP noise and network interactions to the regret growth over time. In particular, $U_2$ aggregates the effects of (i) multi-node coupling through the mixing process, which introduces polynomial dependence on $K$, (ii) model dimensionality $m$ through the variance of the gradient and noise terms, and (iii) the privacy requirement, which enters logarithmically via $\ln(1.25/\delta)$ through the Gaussian mechanism. The logarithmic regret rate follows from the strongly convex setting together with the diminishing step size $\gamma_t$, which is standard in decentralized learning. Specifically, the logarithmic dependence on $T$ arises from the cumulative effect of the diminishing learning rate in the regret analysis. The role of the proposed privacy-power mechanism is reflected in the constants of the regret bound rather than in changing the asymptotic order. In particular, the terms involving the injected DP noise, the power-splitting coefficients $\{\alpha_i,\beta_i\}$, the transmit powers $\{P_i\}$, and the wireless channel gains $\{|h_{ij}|\}$ enter the regret bound through the noise-dependent components of $U_2$.

The complete proof, along with the exact expression for $U_2$, is provided in \cref{app:proof}. The proof borrows some techniques from \cite{9013030}.

From the definition of $\mathbb{R}_j(T)$ in \eqref{eq:regret-fun}, the regret bound in Theorem~\ref{Theorem2_convergence} implies
$$
\frac{\mathbb{E}[\mathbb{R}_{j}(T)]}{T}
=
\frac{1}{T}
\mathbb{E}\!\left[
\sum_{t=1}^{T}
\left(
F(\mathbf{x}_{j,t})-F(\mathbf{x}^{*})
\right)
\right]
=
O\!\left(\frac{\log T}{T}\right),
$$
which converges to zero as $T\to\infty$. Moreover, since $F(\mathbf{x})=\sum_{i=1}^{K}f_i(\mathbf{x})$, this can be written as
$$
\frac{1}{T}
\sum_{t=1}^{T}
\mathbb{E}\!\left[
\sum_{i=1}^{K}
\left(
f_i(\mathbf{x}_{j,t})-f_i(\mathbf{x}^{*})
\right)
\right]
=
O\!\left(\frac{\log T}{T}\right).
$$

Therefore, the proposed algorithm achieves vanishing average regret, establishing convergence in the time-averaged objective (ergodic) sense.

\begin{Corollary}
Under the assumptions of~\cref{Theorem2_convergence},
\[
\mathbb{E}\!\left[
\min_{1\le t\le T}
\left\{
F(\mathbf{x}_{j,t})-F(\mathbf{x}^{*})
\right\}
\right]
=
O\!\left(\frac{\log T}{T}\right).
\]
\end{Corollary}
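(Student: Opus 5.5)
The corollary follows from \cref{Theorem2_convergence} by a pathwise ``minimum is at most the average'' argument, followed by taking expectations. Write $r_t \triangleq F(\mathbf{x}_{j,t})-F(\mathbf{x}^{*})$ for each realization of the algorithm. Since $\mathbf{x}^{*}$ minimizes $F$ over $\Omega$ and every iterate $\mathbf{x}_{j,t}$ lies in $\Omega$ (the update \cref{eq:update-procedure} ends with the projection $\Pi_{\Omega}$), we have $r_t\ge 0$ for every $t$. For any finite collection of real numbers the minimum is bounded by the arithmetic mean, so, pathwise,
\[
\min_{1\le t\le T} r_t \;\le\; \frac{1}{T}\sum_{t=1}^{T} r_t \;=\; \frac{\mathbb{R}_j(T)}{T}.
\]

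Taking expectations preserves this inequality by monotonicity of expectation. All quantities are integrable: $F$ is continuous on the compact set $\Omega$ (bounded diameter $L$ by \cref{ass:feasible-set}), so each $r_t$ is bounded. Applying \cref{Theorem2_convergence} then gives
\[
0\le \mathbb{E}\Big[\min_{1\le t\le T} r_t\Big]\;\le\;\frac{\mathbb{E}[\mathbb{R}_j(T)]}{T}\;\le\;\frac{U_1+U_2(1+\log T)}{T}.
\]
The lower bound $0$ comes from nonnegativity of each $r_t$.

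It remains to check that the right-hand side is $O(\log T/T)$. This holds because $U_1$ and $U_2$ are independent of $T$: they depend only on $K,m,G,\theta,C,\xi,L,\delta$ and the initial points. Hence $U_1+U_2(1+\log T)\le (U_1+2U_2)\log T$ for $T\ge 3$, which gives the claimed rate.

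The only step requiring care is the nonnegativity and feasibility of the iterates. This is what guarantees that the minimum is meaningful and that no sign issues arise when passing the bound through the expectation. If one preferred to avoid relying on the projection, the same argument still works without the lower bound, since the inequality between the minimum and the mean holds for arbitrary reals. Everything else is immediate from \cref{Theorem2_convergence}.
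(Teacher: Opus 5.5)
Your proposal is correct and follows the paper's proof: you bound the minimum pathwise by the average $\mathbb{R}_j(T)/T$, take expectations, and invoke \cref{Theorem2_convergence}. Your remarks on feasibility via $\Pi_{\Omega}$, integrability on the compact set $\Omega$, and the $T$-independence of $U_1,U_2$ make explicit details that the paper leaves implicit.
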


\begin{IEEEproof}
Since the minimum of a finite set of nonnegative numbers does not exceed
their average,
\[
\min_{1\le t\le T}
\left\{
F(\mathbf{x}_{j,t})-F(\mathbf{x}^{*})
\right\}
\le
\frac{\mathbb{R}_j(T)}{T}.
\]

Taking expectations and applying~\cref{Theorem2_convergence} yields
\[
\mathbb{E}\!\left[
\min_{1\le t\le T}
\left\{
F(\mathbf{x}_{j,t})-F(\mathbf{x}^{*})
\right\}
\right]
\le
\frac{\mathbb{E}[\mathbb{R}_j(T)]}{T}
=
O\!\left(\frac{\log T}{T}\right),
\]
which completes the proof.
\end{IEEEproof}

% -------------------------------------------------------------
\section{Simulation Results}\label{sec:sim-results}
We conduct a set of simulation experiments to thoroughly evaluate our algorithm. The analysis explores the effects of heterogeneous channel gains, injected DP noise, IID and non-IID data distributions, the number of clients, and communication topology, along with comparisons with existing methods.
\subsection{Simulation Setup}
We evaluate the proposed algorithm on the CIFAR-10 image-classification task, which contains 60{,}000 color images of size $32\times 32$ across ten classes. 
Each client employs a ResNet\text{-}20 architecture, comprising convolutional layers arranged within residual blocks. Its compact size and efficient layer composition make it well suited for DL configurations. In all experiments, $f_i(\mathbf{x})$ denotes the empirical cross-entropy loss of the ResNet-20 model over client $i$'s local dataset, and $F(\mathbf{x})=\sum_{i=1}^{K}f_i(\mathbf{x})$ represents the aggregate loss across all clients. While convexity is assumed for tractable theoretical analysis, the experiments employ nonconvex neural networks and demonstrate the effectiveness of the proposed algorithm beyond the theoretical setting.

%All simulations are implemented on a GPU server to ensure consistency and reproducibility.
We evaluate the scalability of the proposed DP-DL algorithm for $K=4$, $10$, and $20$ clients. The considered topologies are fully connected and random Erd\H{o}s--R\'enyi graphs with $p=0.4$. For $K=4$, the random topology is replaced by a ring topology, since it frequently fails to satisfy the required strong connectivity condition. The dataset is split into 80\% training and 20\% testing data, with the training set distributed among clients under IID and non-IID settings and a common test set used for evaluation. IID data are uniformly distributed among clients, while non-IID data are generated using
$\mathbf{p}_{i}\sim\operatorname{Dirichlet}(\hat{\alpha}\mathbf{1})$,
where $\mathbf{p}_{i}=(p_{i1},\ldots,p_{iC})$ represents the class distribution at client $i$, and $C$ is the number of classes. Smaller $\hat{\alpha}$ produces more skewed class distributions, while larger values approach the IID setting. We set $\hat{\alpha}=1$, which represents a moderate level of non-IID partitioning. This choice avoids extremely skewed client datasets while still capturing realistic heterogeneity across clients. Finally, we examine the effect of DP noise by considering 
maximum privacy budgets $\epsilon_{\max} \in \{0.5,\, 1,\, \infty\}$.
The first two settings introduce nonzero DP noise, whereas $\epsilon_{\max} = \infty$ corresponds to a baseline without any privacy noise. 
% This allows us to quantify the trade-off between privacy protection and learning performance under decentralized training. 
% \textcolor{blue}{For notational simplicity, the privacy threshold $\epsilon_{\max}$ is denoted by $\epsilon$ in all simulation figures.}

%%%%%%%%%%%%%%%%%%%%

\subsection{Results and Discussion} 
To assess the effectiveness of the proposed framework, we compare it with the PED$^2$FL algorithm~\cite{10025677}. The original PED$^2$FL framework assumes a doubly stochastic adjacency matrix and identical wireless channel gains from each transmitting node to all of its neighbors, i.e., $h_{ij}=h_i$ for all $j\in\mathcal{N}_i$. Therefore, to ensure a fair comparison, we adopted the same channel coefficient setting as in the PED$^2$FL framework. We first computed the compensation factor $c$ according to PED$^2$FL, followed by the corresponding power allocation factors $\alpha_i$ and $\beta_i$. Using these parameters, we evaluated the privacy leakage of PED$^2$FL and obtained the individual per-iteration privacy budgets $\epsilon_i$, whose maximum value was denoted by $\epsilon_{\max}$. We then used this $\epsilon_{\max}$ as the privacy upper bound in our proposed optimization problem to determine the corresponding optimal power allocation factors $\alpha_i$ and $\beta_i$. Finally, these optimized parameters were employed to train our proposed algorithm and evaluate its average test accuracy, enabling a fair comparison under the same maximum privacy leakage. Figures~\ref{fig:Comparison with PED2FL with epsilon = 0.5} and~\ref{fig:Comparison with PED2FL_epsilon half} compare the proposed algorithm, Alg.~\ref{alg:main}, with PED$^2$FL in a fully connected 4-client network under the same maximum privacy leakage constraints. For both $\epsilon_{\max}=1$ and $\epsilon_{\max}=0.5$, the proposed method consistently achieves higher test accuracy throughout training under both IID and non-IID data distributions.

% ============================
% Figure  — Comparison with PED2FL with epsilon = 1
% ============================
\begin{figure}[t]
    \centering
    \includegraphics[width=0.9\linewidth]{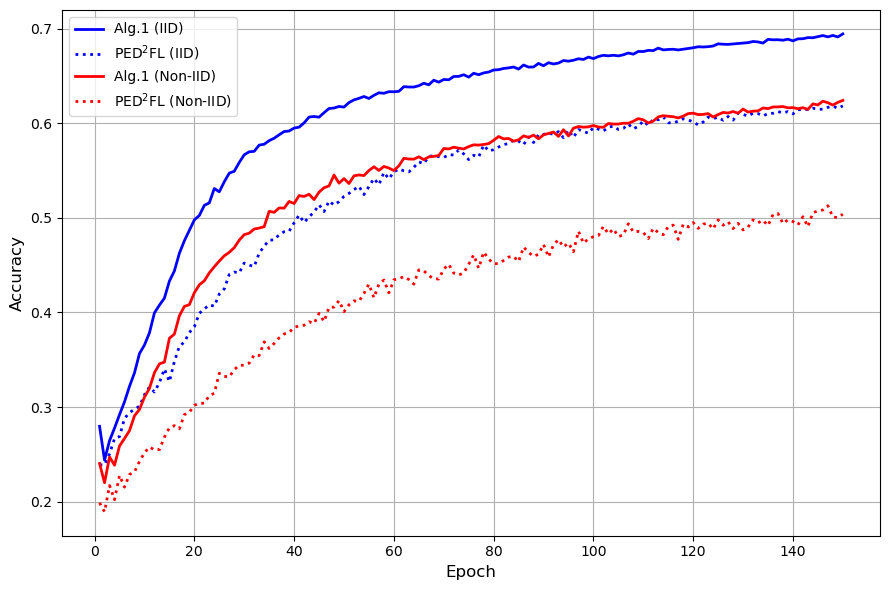} % <-- replace filename
    \vspace{-2ex}
    \caption{Average accuracy of the proposed algorithm and PED$^2$FL in a fully connected 4-client network under IID and non-IID data ($\epsilon_{\max}=1$).}
    \label{fig:Comparison with PED2FL with epsilon = 0.5}
    \vspace{-2ex}
\end{figure}
% ============================
% Figure  — Comparison with PED2FL with epsilon = 0.5
% ============================
\begin{figure}[t]
    \centering
    \includegraphics[width=0.9\linewidth]{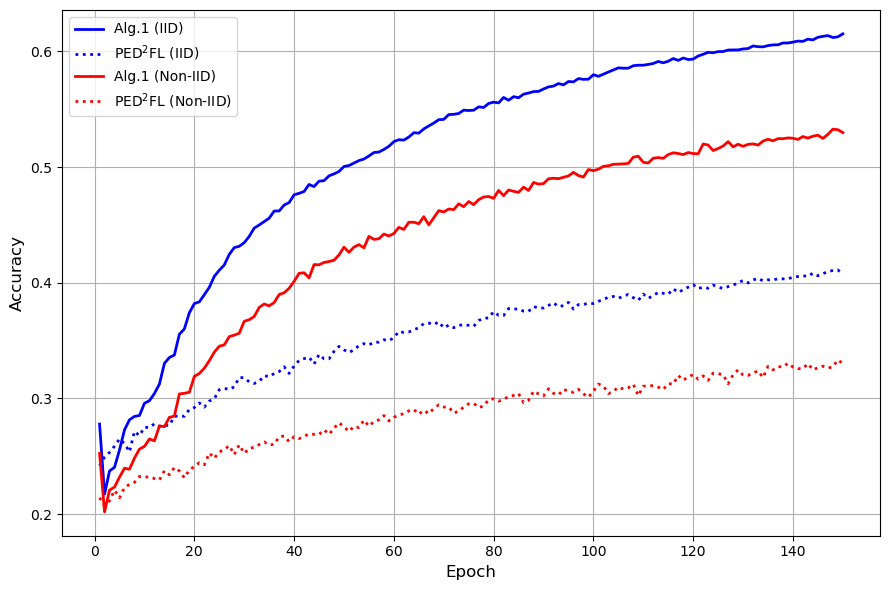} % <-- replace filename
    \vspace{-2ex}
    \caption{Average accuracy of the proposed algorithm and PED$^2$FL in a fully connected 4-client network under IID and non-IID data ($\epsilon_{\max}=0.5$).} 
    \label{fig:Comparison with PED2FL_epsilon half}
    \vspace{-2ex}
\end{figure}

To provide a comprehensive evaluation of the proposed framework, we next compare all considered combinations of network sizes, communication topologies, data distributions, and privacy levels in terms of both learning performance and cumulative privacy leakage.

% For fairness and consistency across different topologies, a single set of fixed channel gains $|h_{ij}|$ is first randomly sampled from the uniform distribution $\mathcal{U}[0.3,1]$ for the links of the fully connected graph, with $|h_{ij}|\neq |h_{ji}|$ in general. The lower bound of this range avoids excessively weak channels that could substantially suppress inter-node communication. The channel gains associated with the links retained after pruning the fully connected graph to construct the ring or random Erd\H{o}s--R\'enyi topology are inherited from the same channel realization. Therefore, the evaluated topologies share the same channel gains on their common links, allowing their performance to be compared under consistent wireless channel conditions.
For consistent comparison across topologies, fixed channel gains $|h_{ij}|$ are sampled from $\mathcal{U}[0.3,1]$ for the fully connected graph, with $|h_{ij}|\neq|h_{ji}|$ in general. The lower bound avoids excessively weak links. The ring and random topologies inherit the channel gains of their retained links from the same realization, ensuring consistent channel conditions across topologies.

Before executing the proposed DP-DL algorithm, the maximization problem in \eqref{eq:alpha-LP} is solved to obtain the optimal power-splitting coefficients $\{\alpha_i,\beta_i\}_{i=1}^{K}$, which are then kept fixed throughout training. In all simulations, the clients have identical maximum transmit power budgets, with $P_i=1$ for all $i\in\mathcal{V}$. The learning rate and Gaussian noise standard deviation are set to $\gamma_t=0.1/\sqrt{t}$ and $\sigma_{i,t}=1/\sqrt{t}$, respectively, while unit-norm gradient clipping is used, yielding $G=1$. We set $\delta=\bar{\delta}=10^{-4}$ for the per-iteration DP guarantee
and the cumulative RDP-based privacy budget, respectively. These settings are kept consistent across the experiments. To determine the training horizon for cumulative privacy accounting, we use a stopping criterion with a $15$-iteration grace period. Let $\mathrm{acc}_t$ denote the average test accuracy at iteration $t$. Training is considered stable when
$|\mathrm{acc}_{t+k}-\mathrm{acc}_t|\leq0.001$ for all $k=1,\ldots,15$. The resulting training duration determines the communication rounds used for cumulative RDP accounting.

Figs.~\ref{fig:Accuracy vs privacy for 4 clients}--\ref{fig:Accuracy vs privacy for 20 clients} show average test accuracy versus cumulative RDP leakage for $K=4,10,$ and $20$ across the considered settings. Here, $\epsilon_{\max}\in\{0.5,1,\infty\}$ is the per-iteration privacy threshold. The $48{,}000$ training samples yield about $12{,}000$, $4{,}800$, and $2{,}400$ samples per client, corresponding to subsampling rates $q\approx0.0213$, $0.0533$, and $0.1067$ with batch size $256$. These rates are used for cumulative RDP accounting.

Across all configurations, IID data generally achieve higher accuracy than non-IID data, while stronger privacy ($\epsilon_{\max}=0.5$) reduces accuracy relative to $\epsilon_{\max}=1$ and $\infty$. This reflects the privacy-utility trade-off, since stronger privacy requires more transmit power to be allocated to DP noise. Fully connected networks generally outperform ring and random topologies, as their higher connectivity improves information aggregation during training. Cumulative RDP leakage generally increases with $K$ because the fixed mini-batch size results in larger $q$ and weaker privacy amplification, while for a fixed number of clients, differences mainly reflect the number of training iterations.
% ===========================
% Figure  — Accuracy vs. Privacy for 4 nodes(Total RDP)
% ============================
\begin{figure}[t]
    \centering
    \includegraphics[width=0.9\linewidth]{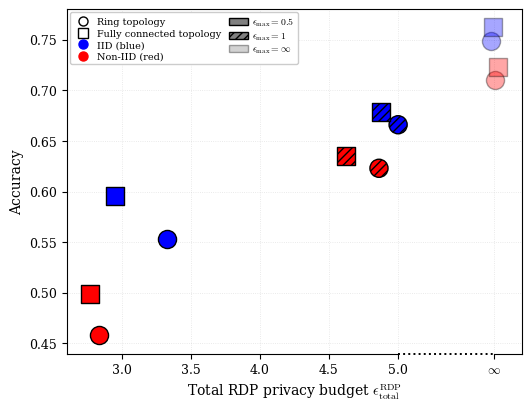} % <-- replace filename
    \vspace{-2ex}
    \caption{Average accuracy versus cumulative RDP leakage for 4 clients across topologies and data distributions, with per-iteration privacy budget $\epsilon_{\max}$.}
    \label{fig:Accuracy vs privacy for 4 clients}
    \vspace{-2ex}
\end{figure}

% Figure  — Accuracy vs. Privacy for 10 nodes (Total RDP)
% ============================
\begin{figure}[t]
    \centering
    \includegraphics[width=0.9\linewidth]{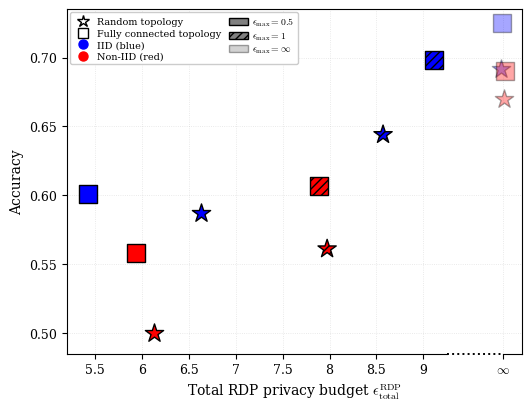} % <-- replace filename
    \vspace{-2ex}
    \caption{Average accuracy versus cumulative RDP leakage for 10 clients across topologies and data distributions, with per-iteration privacy budget $\epsilon_{\max}$.}
    \label{fig:Accuracy vs privacy for 10 clients}
    \vspace{-2ex}
\end{figure}

% Figure  — Accuracy vs. Privacy for 20 nodes (Total RDP)
% ============================
\begin{figure}[t]
    \centering
    \includegraphics[width=0.9\linewidth]{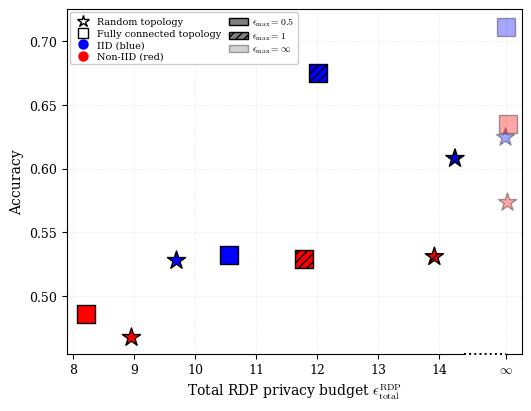} % <-- replace filename
    \vspace{-2ex}
    \caption{Average accuracy versus cumulative RDP leakage for 20 clients across topologies and data distributions, with per-iteration privacy budget $\epsilon_{\max}$.}
    \label{fig:Accuracy vs privacy for 20 clients}
    \vspace{-2ex}
\end{figure}

%%%%%%%%%%%%%%%%%%%%%%%%%%%%%%%%%%%%%%%%%%%%%%%%%%%%
\section{Conclusion}\label{sec:conclusion}
This paper proposed a power-controlled DP-DL framework for wireless multicast networks with asymmetric channel gains. Using a row-stochastic adjacency matrix and jointly allocating transmit power between model updates and Gaussian DP noise, the proposed algorithm provides both convergence and DP guarantees. We established an $O(\log T)$ regret bound, demonstrating that learning remains efficient despite channel heterogeneity and privacy-induced perturbations. Experiments on CIFAR-10 demonstrate robust convergence across different numbers of clients, data distributions, privacy budgets, and network topologies. The results highlight the privacy-utility trade-off and the role of network connectivity in mitigating the effects of DP noise, demonstrating the effectiveness of the proposed DP-DL framework in realistic wireless settings.
% -----------------------------------------------------------
\appendices
\section{Proof of Convergence}\label[appendix]{app:proof}
Denote $\mathbf{s}_{i,t}=\sum_{j=1}^{K} a_{ij}(\mathbf{x}_{j,t}+\sqrt{\frac{\beta_{j}}{\alpha_{j}}} \boldsymbol{\eta}_{j,t})$, $\mathbf{w}_{i,t}=\sum_{j=1}^{K} a_{ij}\mathbf{x}_{j,t}$ and $\mathbf{q}_{i,t+1}=\mathbf{x}_{i,t+1}-\mathbf{s}_{i,t}$. Since $\mathbf{x}_{j,t}\in\Omega$, their convex combination $\mathbf{w}_{i,t}\in\Omega$, implying $\|\Pi_{\Omega}(\mathbf{y})-\mathbf{w}_{i,t}\|\leq\|\mathbf{y}-\mathbf{w}_{i,t}\|$ for any $\mathbf{y}$, $i$, and $t$.
We have:
%\footnote{\jj{Y: Now I see why my suggested bound doesn't work, but adding and subtracting $\mathbf w_{i,t}$ does: because $\mathbf w_{i,t} \in \Omega$ and $\mathbf s_{i,t} \notin \Omega$ in general.}}
\begin{equation}\label{q_i,t+1upperbound}
\begin{aligned}
    &\|\mathbf{q}_{i,t+1}\| = \|\mathbf{x}_{i,t+1} - \mathbf{s}_{i,t}\| \\
    &\leq \|\mathbf{x}_{i,t+1} - \mathbf{w}_{i,t}\| + \|\mathbf{w}_{i,t} - \mathbf{s}_{i,t}\|\\
   %  &\leq \|\mathbf{s}_{i,t}-\frac{\gamma_{t} \mathbf{g}_{i,t}}{z_{ii,t}} -\mathbf{w}_{i,t}\| + \|\mathbf{w}_{i,t} - \mathbf{s}_{i,t}\|\\
   % %&= \|\mathbf{s}_{i,t}-(\frac{\gamma_{t} \mathbf{g}_{i,t}}{z_{ii,t}} +\mathbf{w}_{i,t})\| + \|\mathbf{w}_{i,t} - \mathbf{s}_{i,t}\|\\
   %  &\leq \|\mathbf{s}_{i,t} - \mathbf{w}_{i,t}\| + \|\mathbf{w}_{i,t}-(\frac{\gamma_{t} \mathbf{g}_{i,t}}{z_{ii,t}} + \mathbf{w}_{i,t})\| + \|\mathbf{w}_{i,t} - \mathbf{s}_{i,t}\|\\
    &\leq 2\|\mathbf{s}_{i,t}-\mathbf{w}_{i,t}\| + \gamma_{t}\|\frac{\mathbf{g}_{i,t}}{z_{ii,t}}\| \leq 2\sum_{j=1}^{K}  \| \sqrt{\frac{\beta_{j}}{\alpha_{j}}}\boldsymbol{\eta}_{j,t} \|  + \gamma_{t} \theta G ,
\end{aligned}
\end{equation}
where the last inequality follows by assuming that all $a_{ij}$ are set to their maximum value of $1$.

Recursively expanding $\mathbf{s}_{i,t}$ yields
\begin{equation}\label{x_i_t+1_in_terms_of_q}
\begin{aligned}
    \mathbf{x}_{i,t+1}& = \mathbf{q}_{i,t+1}+\sum_{l=1}^{t}\sum_{j=1}^{K} [A^{t-l+1}]_{ij}\mathbf{q}_{j,l} \\
    &+ \sum_{j=1}^{K} [A^{t+1}]_{ij} \mathbf x_{j,0} + \sum_{l=0}^{t}\sum_{j=1}^{K} [A^{t-l+1}]_{ij}\sqrt{\frac{\beta_{j}}{\alpha_{j}}}\boldsymbol{\eta}_{j,l}.
\end{aligned}    
\end{equation}

%For calculating $\bar{\mathbf{x}}_{t+1}$, the expression $\mathbf{q}_{i,t+1} = \mathbf{x}_{i,t+1} - \mathbf{s}_{i,t}$ is utilized as follows:
% Next, we define the weighted sum of individual model vectors:
% \begin{equation}\label{x_bar}
% \begin{aligned}
%     \bar{\mathbf{x}}_{t+1}
%     &\triangleq \sum_{i=1}^{K} \pi_i \mathbf{x}_{i,t+1} = \sum_{i=1}^{K}\pi_i (\mathbf{q}_{i,t+1} + \mathbf{s}_{i,t})\\
%     &= \sum_{i=1}^{K}\pi_i 
%     \Big(\mathbf{q}_{i,t+1}+\sum_{j=1}^{K} a_{ij} (\mathbf{x}_{j,t}+\sqrt{\frac{\beta_{j}}{\alpha_{j}}}\boldsymbol{\eta}_{j,t})\Big)\\
%     %&\jj{= \sum_{i=1}^{K}\pi_i \mathbf{q}_{i,t+1} + \sum_{j=1}^K \left(\sum_{i=1}^{K}\pi_i a_{ij} \right) \mathbf{x}_{j,t}}\\
%     %&\jj{\qquad+\sum_{j=1}^K \left(\sum_{i=1}^{K}\pi_i a_{ij} \right) \sqrt{\frac{\beta_{j}}{\alpha_{j}}}\boldsymbol{\eta}_{j,t}}\\
%     &\overset{\text{(a)}}{=} \sum_{i=1}^{K}\pi_i \mathbf{q}_{i,t+1} + \sum_{i=1}^{K}\pi_i \mathbf{x}_{i,t}
%     +\sum_{i=1}^{K}\pi_i \sqrt{\frac{\beta_{i}}{\alpha_{i}}}\boldsymbol{\eta}_{i,t}\\
%     &= \sum_{i=1}^{K}\pi_i \mathbf{q}_{i,t+1} + \bar{\mathbf{x}}_{t} +
%     \sum_{i=1}^{K}\pi_i \sqrt{\frac{\beta_{i}}{\alpha_{i}}}\boldsymbol{\eta}_{i,t},\\
% \end{aligned}    
% \end{equation}
% where (a) comes from the fact that $\boldsymbol{\pi}^\top A = \boldsymbol{\pi}^\top$, and thus, $\sum_{i=1}^K \pi_i a_{ij} = \pi_j$.
Considering the weighted average as
$\bar{\mathbf{x}}_t=\sum_{i=1}^{K}\pi_i\mathbf{x}_{i,t}$ and using $\boldsymbol{\pi}^{\top}A=\boldsymbol{\pi}^{\top}$, we obtain
\begin{equation}\label{x_bar}
\bar{\mathbf{x}}_{t+1}
=
\bar{\mathbf{x}}_t
+\sum_{i=1}^{K}\pi_i\mathbf{q}_{i,t+1}
+\sum_{i=1}^{K}\pi_i
\sqrt{\frac{\beta_i}{\alpha_i}}\boldsymbol{\eta}_{i,t}.
\end{equation}

%By taking the summation of both sides of the equation \cref{x_bar} from 0 to $t$, the following can be obtained:
%\begin{equation}\label{summation_x_bar}
%\begin{aligned}
%    \sum_{l=0}^{t} \bar{\mathbf{x}}_{l+1} &= \bar{\mathbf{x}}_{1} + \bar{\mathbf{x}}_{2} + ...+\bar{\mathbf{x}}_{t+1}\\
%    &= \sum_{l=1}^{t+1} \sum_{i=1}^{K}\pi_i \mathbf{q}_{i,l}  + \bar{\mathbf{x}}_{0} + \bar{\mathbf{x}}_{1} + ...+\bar{\mathbf{x}}_{t} \\
%    &+ \sum_{l=0}^{t}\sum_{i=1}^{K}\pi_i \sqrt{\frac{\beta_{i}}{\alpha_{i}}}\boldsymbol{\eta}_{i,l}.\\
%\end{aligned}
%\end{equation}

%By canceling the common terms $\bar{\mathbf{x}}_1,\ldots,\bar{\mathbf{x}}_t$ on both sides of \eqref{summation_x_bar}, we obtain the following expression for $\bar{\mathbf{x}}_{t+1}$.

Summing \cref{x_bar} from $l=0$ to $t$ and telescoping yields
\begin{equation}\label{final_x_bar}
\begin{aligned}
    \bar{\mathbf{x}}_{t+1}=\bar{\mathbf{x}}_{0}+\sum_{l=1}^{t+1} \sum_{i=1}^{K}\pi_i \mathbf{q}_{i,l}+\sum_{l=0}^{t}\sum_{i=1}^{K}\pi_i \sqrt{\frac{\beta_{i}}{\alpha_{i}}}\boldsymbol{\eta}_{i,l}.
\end{aligned}
\end{equation}

Combining (18), (19), and (21) with Lemma~1 and applying the
triangle inequality, we obtain the following expected consensus error bound:
\begin{equation}\label{formula 24}
\begin{aligned}
    &\mathbb{E} \|\bar{\mathbf{x}}_{t+1} - \mathbf{x}_{i,t+1}\|\\
    &\leq 4\sum_{j=1}^{K} \mathbb{E}\|\sqrt{\frac{\beta_{j}}{\alpha_{j}}}\boldsymbol{\eta}_{j,t}\| + 2\gamma_{t} \theta G + C\xi^{t+1}\sum_{j=1}^{K} \|\mathbf{x}_{j,0}\|\\
    &+ C\sum_{l=0}^{t}\xi^{t-l+1}\sum_{j=1}^{K}\mathbb{E}\|\sqrt{\frac{\beta_{j}}{\alpha_{j}}}\boldsymbol{\eta}_{j,l} \| \\
    &+2KC\sum_{l=1}^{t}\xi^{t-l+1}\sum_{j=1}^{K}\mathbb{E}\| \sqrt{\frac{\beta_{j}}{\alpha_{j}}}\boldsymbol{\eta}_{j,l-1}\|\\
    &+ K\theta C G \sum_{l=1}^{t} \xi^{t-l+1}\gamma_{l-1}.\\ 
\end{aligned}    
\end{equation}

Now, we want to find an upperbound for $\sum_{i=1}^{K} \pi_i \mathbb{E} \|\mathbf{x}_{i,t+1} - \mathbf{x}\|^{2}$ in order to utilize in the next steps. First, we start with $\|\mathbf{x}_{i,t+1} - \mathbf{x} \| ^{2}$ as follows:
\begin{equation}\label{||x_i,t+1-x||^2}
\begin{aligned}
    &\|\mathbf{x}_{i,t+1} - \mathbf{x} \| ^ {2} \leq  \|\mathbf{s}_{i,t} - \gamma_{t} \frac{\mathbf{g}_{i,t}}{z_{ii,t}} - \mathbf{x}\|^{2} \\
    & = \|\mathbf{s}_{i,t} - \mathbf{x}\|^{2} + \frac{\gamma_{t}^{2}}{z_{ii,t}^{2}} \|\mathbf{g}_{i,t}\|^{2} - 2 \frac{\gamma_{t}}{z_{ii,t}} \mathbf{g}_{i,t}^{T}(\mathbf{s}_{i,t} - \mathbf{x}). \\
\end{aligned}    
\end{equation}

The three terms in \eqref{||x_i,t+1-x||^2} can be bounded using the
zero-mean Gaussian noise, \cref{ass:bounded-gradients}, and
\cref{ass:strong-convexity}, respectively, as follows:
\begin{equation}\label{E||s_i,t-x||^2}
\begin{aligned}
\mathbb{E}\|\mathbf{s}_{i,t}-\mathbf{x}\|^2
\leq\;&
\sum_{j=1}^{K}a_{ij}\mathbb{E}\|\mathbf{x}_{j,t}-\mathbf{x}\|^2
+\sum_{j=1}^{K}a_{ij}
\mathbb{E}\left\|
\sqrt{\frac{\beta_j}{\alpha_j}}\boldsymbol{\eta}_{j,t}
\right\|^2 .
\end{aligned}
\end{equation}
Moreover,
\begin{equation}\label{gamma/z_ii,t||g_i,t||^2}
\frac{\gamma_t^2}{z_{ii,t}^2}\|\mathbf{g}_{i,t}\|^2
\leq\gamma_t^2\theta^2G^2,
\end{equation}
and
\begin{equation}\label{-g_i,t(s_i,t-x)upperbound}
\begin{aligned}
&-\mathbf{g}_{i,t}^{T}(\mathbf{s}_{i,t}-\mathbf{x})\\
&\leq
\|\mathbf{g}_{i,t}\|\sum_{j=1}^{K}a_{ij}
\left\|\sqrt{\frac{\beta_j}{\alpha_j}}\boldsymbol{\eta}_{j,t}\right\|
+\|\mathbf{g}_{i,t}\|\sum_{j=1}^{K}a_{ij}
\|\mathbf{x}_{j,t}-\bar{\mathbf{x}}_t\|\\
&+\|\mathbf{g}_{i,t}\|\|\bar{\mathbf{x}}_t-\mathbf{x}_{i,t}\|
+f_i(\mathbf{x})-f_i(\mathbf{x}_{i,t})
-\frac{\mu}{2}\|\mathbf{x}-\mathbf{x}_{i,t}\|^2.
\end{aligned}
\end{equation}

Finally, the following inequality is obtained by combining \eqref{E||s_i,t-x||^2}, \eqref{gamma/z_ii,t||g_i,t||^2}, and \eqref{-g_i,t(s_i,t-x)upperbound} and applying \cref{ass:bounded-gradients}:
\begin{equation}\label{E||x_i,t+1-x||^2}
\begin{aligned}
    &\mathbb{E} \|\mathbf{x}_{i,t+1} - \mathbf{x}\|^{2}\\ 
    &\leq \sum_{j=1}^{K} a_{ij} \mathbb{E} \|\mathbf{x}_{j,t} - \mathbf{x} \|^{2} + \sum_{j=1}^{K} a_{ij} \mathbb{E} \|\sqrt{\frac{\beta_{j}}{\alpha_{j}}}\boldsymbol{\eta}_{j,t}\|^{2} \\
    &+ \gamma_{t}^{2} \theta^{2} G^{2} + 2\theta \gamma_{t}(G \sum_{j=1}^{K} a_{ij} \mathbb{E} \|\sqrt{\frac{\beta_{j}}{\alpha_{j}}}\boldsymbol{\eta}_{j,t}\|\\
    &+ G \sum_{j=1}^{K} a_{ij} \mathbb{E} \| \mathbf{x}_{j,t} - \bar{\mathbf{x}}_t\| + G \|\bar{\mathbf{x}}_t - \mathbf{x}_{i,t}\| \\
    &- \frac{\mu}{2} \mathbb{E} \|\mathbf{x} - \mathbf{x}_{i,t}\|^{2}) + \frac{2\gamma_{t}}{z_{ii,t}}\mathbb{E} [f_{i}(\mathbf{x}) - f_{i}(\mathbf{x}_{i,t})].
\end{aligned}    
\end{equation}

By multiplying both sides of \eqref{E||x_i,t+1-x||^2} by $\pi_i$ and then summing over $i$, we obtain the following:
\begin{equation}\label{summation_pi_i_E||x_i,t+1-x||^2}
\begin{aligned}
    &\sum_{i=1}^{K} \pi_i \mathbb{E} \|\mathbf{x}_{i,t+1} - \mathbf{x}\|^{2}\\ 
    % &\leq \sum_{i=1}^{K} \pi_{i} \mathbb{E} \|\mathbf{x}_{i,t} - \mathbf{x} \|^{2} + \sum_{i=1}^{K} \pi_{i} \mathbb{E} \|\sqrt{\frac{\beta_{i}}{\alpha_{i}}}\boldsymbol{\eta}_{i,t}\|^{2} \\
    % &+ 2\theta G \gamma_t \sum_{i=1}^{K} \pi_i \mathbb{E} \|\sqrt{\frac{\beta_{i}}{\alpha_{i}}}\boldsymbol{\eta}_{i,t}\|\\ 
    % &+ 2\theta G \gamma_t \sum_{i=1}^{K} \pi_i \mathbb{E} \|\mathbf{x}_{i,t} - \bar{\mathbf{x}}_t\| + 2\theta G \gamma_t \sum_{i=1}^{K} \pi_i \mathbb{E} \| \bar{\mathbf{x}}_t - \mathbf{x}_{i,t}\|\\
    % & - 2 \theta \gamma_t \sum_{i=1}^{K} \frac{\mu}{2} \pi_i \mathbb{E} \|\mathbf{x} - \mathbf{x}_{i,t}\|^{2} + \theta^{2} G^{2} \gamma_{t}^{2} \sum_{i=1}^{K} \pi_i \\
    % &+ 2 \gamma_t \sum_{i=1}^{K} \frac{\pi_i}{z_{ii,t}} \mathbb{E} [f_{i}(\mathbf{x}) - f_{i}(\mathbf{x}_{i,t})]\\
    % &=
    &\leq (1-\mu \theta \gamma_t ) \sum_{i=1}^{K} \pi_{i} \mathbb{E} \|\mathbf{x}_{i,t} - \mathbf{x} \|^{2} + \sum_{i=1}^{K} \pi_{i} \mathbb{E} \|\sqrt{\frac{\beta_{i}}{\alpha_{i}}}\boldsymbol{\eta}_{i,t}\|^{2} \\
    &+ 2\theta G \gamma_t \sum_{i=1}^{K} \pi_i \mathbb{E} \|\sqrt{\frac{\beta_{i}}{\alpha_{i}}}\boldsymbol{\eta}_{i,t}\| + 4\theta G \gamma_t \sum_{i=1}^{K} \pi_i \mathbb{E} \|\mathbf{x}_{i,t} - \bar{\mathbf{x}}_t\|\\
    & + \theta^{2} G^{2} \gamma_{t}^{2} + 2 \gamma_t \sum_{i=1}^{K} \frac{\pi_i}{z_{ii,t}} \mathbb{E} [f_{i}(\mathbf{x}) - f_{i}(\mathbf{x}_{i,t})].\\
\end{aligned}    
\end{equation}

Applying Jensen's inequality and utilizing \cref{lem:convergence-to-pi}, we can rewrite the last term in \eqref{summation_pi_i_E||x_i,t+1-x||^2} as follows:
\begin{equation}\label{pi_i/z_ii,t_E[f_i,t(x)-f_i,t(x_i,t)]}
\begin{aligned}
    &\frac{\pi_i}{z_{ii,t}} \mathbb{E} [f_{i}(\mathbf{x}) - f_{i}(\mathbf{x}_{i,t})]\\ 
    &= (\frac{\pi_i}{z_{ii,t}} -1) \mathbb{E} [f_{i}(\mathbf{x}) - f_{i}(\mathbf{x}_{i,t})] + \mathbb{E} [f_{i}(\mathbf{x}) - f_{i}(\mathbf{x}_{i,t})]\\
    &\leq |\frac{\pi_i - z_{ii,t}}{z_{ii,t}}| \mathbb{E} [|f_{i}(\mathbf{x}) - f_{i}(\mathbf{x}_{i,t})|] + \mathbb{E} [f_{i}(\mathbf{x}) - f_{i}(\mathbf{x}_{i,t})]\\
    &\leq \frac{C\xi^{t}}{|z_{ii,t}|}\mathbb{E} [|f_{i}(\mathbf{x}) - f_{i}(\mathbf{x}_{i,t})|] + \mathbb{E} [f_{i}(\mathbf{x}) - f_{i}(\mathbf{x}_{i,t})].
\end{aligned}    
\end{equation}

Using \eqref{pi_i/z_ii,t_E[f_i,t(x)-f_i,t(x_i,t)]}, and then rearranging the resulting bound and evaluating it at $\mathbf{x}=\mathbf{x}^\ast$, we obtain
\begin{equation}\label{summation_E[f_i,t(x)-f_i,t(x*)]}
\begin{aligned}
&\sum_{i=1}^{K} \mathbb{E}\!\left[
f_{i}(\mathbf{x}_{i,t}) - f_{i}(\mathbf{x}^*)
\right]\\
&\le
\frac{1}{2\gamma_t}
\left(1-\mu\theta\gamma_t\right)
\sum_{i=1}^{K} \pi_i
\mathbb{E}\!\left[
\|\mathbf{x}_{i,t}-\mathbf{x}^*\|^2
\right]
\\
&-\frac{1}{2\gamma_t}
\sum_{i=1}^{K} \pi_i
\mathbb{E}\!\left[
\|\mathbf{x}_{i,t+1}-\mathbf{x}^*\|^2
\right]
+\frac{1}{2\gamma_t}
\sum_{i=1}^{K} \pi_i
\mathbb{E}
\|
\sqrt{\tfrac{\beta_i}{\alpha_i}}\,
\boldsymbol{\eta}_{i,t}
\|^2
\\
&+\theta G
\sum_{i=1}^{K} \pi_i
\mathbb{E}\|
\sqrt{\tfrac{\beta_i}{\alpha_i}}\,
\boldsymbol{\eta}_{i,t}
\|
+2\theta G
\sum_{i=1}^{K} \pi_i
\mathbb{E}
\|\mathbf{x}_{i,t}-\bar{\mathbf{x}}_t\|
\\
& +\tfrac{1}{2}\theta^2 G^2 \gamma_t
+\theta C G \xi^{t}
\sum_{i=1}^{K}
\mathbb{E}
\|\mathbf{x}_{i,t}-\mathbf{x}^*\|.
\end{aligned}
\end{equation}

Since the regret at node $j$ involves $f_i(x_{j,t})$, using the Lipschitz continuity of $f_i(\cdot)$ and the bounded subgradient, we obtain
\begin{equation}\label{f(x_j,t)-f(x_i,t}
\begin{aligned}
f_i(x_{j,t})-f_i(x^\star)
&\leq f_i(x_{i,t})-f_i(x^\star)\\
&\quad+G\left(\|x_{j,t}-\bar{x}_t\|
+\|x_{i,t}-\bar{x}_t\|\right).
\end{aligned}
\end{equation}

By taking into consideration \eqref{f(x_j,t)-f(x_i,t} and then summing \eqref{summation_E[f_i,t(x)-f_i,t(x*)]} over $T$ iterations, we can find an upper bound on the expected regret function $\mathbb{E}[\mathbb{R}_{j}(T)]$, as follows:
\begin{equation}\label{E[R_i(T)]}
\begin{aligned}
&\sum_{t=1}^{T}\sum_{i=1}^{K} \mathbb{E}\!\left[
f_{i}(\mathbf{x}_{j,t}) - f_{i}(\mathbf{x}^*)
\right]\\
&\quad
\le
\sum_{t=1}^{T}
\frac{1}{2\gamma_t}
\left(1-\mu\theta\gamma_t\right)
\sum_{i=1}^{K} \pi_i
\mathbb{E}\!\left[
\|\mathbf{x}_{i,t}-\mathbf{x}^*\|^2
\right]
\\
&\quad
-\sum_{t=1}^{T}
\frac{1}{2\gamma_t}
\sum_{i=1}^{K} \pi_i
\mathbb{E}\!\left[
\|\mathbf{x}_{i,t+1}-\mathbf{x}^*\|^2
\right]
\\
&\quad
+\sum_{t=1}^{T}
\frac{1}{2\gamma_t}
\sum_{i=1}^{K} \pi_i
\mathbb{E}
\Big\|
\sqrt{\tfrac{\beta_i}{\alpha_i}}\,
\boldsymbol{\eta}_{i,t}
\Big\|^2
\\
&\quad
+\sum_{t=1}^{T}
\theta G
\sum_{i=1}^{K} \pi_i
\mathbb{E}
\Big\|
\sqrt{\tfrac{\beta_i}{\alpha_i}}\,
\boldsymbol{\eta}_{i,t}
\Big\|
\!+\!\sum_{t=1}^{T}
2\theta G
\sum_{i=1}^{K} \pi_i
\mathbb{E}
\|\mathbf{x}_{i,t}-\bar{\mathbf{x}}_t\|
\\
&\quad
+\sum_{t=1}^{T}
\tfrac{1}{2}\theta^2 G^2 \gamma_t
+\sum_{t=1}^{T}
\theta C G \xi^{t}
\sum_{i=1}^{K}
\mathbb{E}
\|\mathbf{x}_{i,t}-\mathbf{x}^*\| \\
&\quad + KG
\sum_{t=1}^{T}
\mathbb{E}\!\left[
\left\|
\mathbf{x}_{j,t}
-
\bar{\mathbf{x}}_{t}
\right\|
\right]
+
G
\sum_{t=1}^{T}
\sum_{i=1}^{K}
\mathbb{E}\!\left[
\left\|
\mathbf{x}_{i,t}
-
\bar{\mathbf{x}}_{t}
\right\|
\right].
\end{aligned}
\end{equation}

Now, we derive an upperbound for the noise-related terms in \eqref{E[R_i(T)]} as follows: 
\begin{equation}\label{E||sqrt(beta_i/alpha_i)eta_i,t|| _upperbound}
\begin{aligned}
    & \mathbb{E}[\|\sqrt{\frac{\beta_{i}}{\alpha_{i}}} \boldsymbol{\eta}_{i,t} \|] \leq \sqrt{m} \max_{i} \sqrt{\frac{\beta_{i}}{\alpha_{i}}} \sigma_{i,t}\\
    &\stackrel{(a)}{\leq} \frac{\sqrt{m} \max_{i} \Big ( \sqrt{\frac{\beta_{i}}{\alpha_{i}}}\Big ) G \gamma_t \theta \max_{i,j} \Big ( 2|h_{ji}| \sqrt{\alpha_j P_j} \Big )  \sqrt{2 \ln \frac{1.25}{\delta}} }{ \min_{i,j,k} \Big( \epsilon_{ij} \sqrt{\sum_{k \in N_i} |h_{ki}|^2 \beta_k P_k }\Big)},\\
\end{aligned}
\end{equation}
where (a) is obtained by using the expression of $\epsilon_{ij}$ in \eqref{eq:e_ij-thm}. 
% Furthermore, $m$ is the dimension of both model coefficients and noise converted to a vector. 

Also, we can bound $\mathbb{E}[\|\sqrt{\frac{\beta_{i}}{\alpha_{i}}} \boldsymbol{\eta}_{i,t} \|^{2}]$ as follows:
\begin{equation}\label{summation_E[||sqrt_beta_i/alpha_i_eta_i,t||^2]}
\begin{aligned}
    &\mathbb{E}[\|\sqrt{\frac{\beta_{i}}{\alpha_{i}}} \boldsymbol{\eta}_{i,t} \|^{2}] \leq m \max_{i} {\frac{\beta_{i}}{\alpha_{i}}} \sigma_{i,t}^{2}\\
    &\leq \frac{m \max_{i} \Big ({\frac{\beta_{i}}{\alpha_{i}}}\Big ) G^{2} \gamma_{t}^{2} \theta^{2} \max_{i,j} \Big ( 2 |h_{ji}| \sqrt{\alpha_j P_j} \Big )^{2}  \Big ({2 \ln \frac{1.25}{\delta}}\Big ) }{ \min_{i,j,k} \Big( \epsilon_{ij}^{2} {\sum_{k \in N_i} |h_{ki}|^2 \beta_k P_k }\Big)}.\\
\end{aligned}
\end{equation}

By considering $\gamma_t = \frac{1}{\mu \theta t}$, we have:
\begin{equation}\label{summation_gamma}
    \begin{aligned}
\sum_{t=1}^{T} \gamma_t = \sum_{t=1}^{T} \frac{1}{\theta \mu\, t}
\;\le\;
\frac{1}{\theta \mu}\,(1+\log T).
    \end{aligned}
\end{equation}

Using \eqref{formula 24} and substituting the results from
\eqref{summation_gamma},
\eqref{E||sqrt(beta_i/alpha_i)eta_i,t|| _upperbound}, and
\eqref{summation_E[||sqrt_beta_i/alpha_i_eta_i,t||^2]}
into \eqref{E[R_i(T)]}, we obtain the following upper bound on the expected regret $\mathbb{E}[\mathbb{R}_{j}(T)]$:
\begin{equation}\label{Final upperbound}
\begin{aligned}
    &\mathbb{E}[\mathbb{R}_{j}(T)] \leq \frac{\xi C G}{1-\xi} \Big (2(K+\theta) \sum_{i=1}^{K}\|x_{i,0}\| + K\theta L \Big ) \\
    & + (1 + \log T) \Big (\frac{\max_{i} \sqrt{\frac{\beta_{i}}{\alpha_{i}}}\max_{i,j} \big ( |h_{ji}| \sqrt{\alpha_j P_j} \big )}{\mu} \Big )\\ 
    &\times \Big (\frac{2 (8K^{2}+8\theta K + \theta )\sqrt{m} G^{2}  \sqrt{2 \ln \frac{1.25}{\delta}} }{\min_{i,j,k} \Big( \epsilon_{ij} \sqrt{\sum_{k \in N_i} |h_{ki}|^2 \beta_k P_k }\Big)} \Big)\\
    &+ (1 + \log T) \Big ( \frac{\max_{i} \Big ({\frac{\beta_{i}}{\alpha_{i}}}\Big )  \max_{i,j} \Big ( |h_{ji}| \sqrt{\alpha_j P_j} \Big )^{2}}{\mu} \Big ) \\
    &\times \Big ( \frac{4 m G^{2} \theta \ln (\frac{1.25}{\delta})}{\min_{i,j,k} \Big( \epsilon_{ij} \sqrt{\sum_{k \in N_i} |h_{ki}|^2 \beta_k P_k }\Big)} \Big) \\ 
    &+ (1 + \log T) \Big (\frac{\max_{i} \sqrt{\frac{\beta_{i}}{\alpha_{i}}}\max_{i,j} \big ( |h_{ji}| \sqrt{\alpha_j P_j} \big )}{\mu} \Big )\\ 
    &\times \frac{4(2K+1)(\theta + K) K \sqrt{m} C G^{2} \sqrt{2 \ln \frac{1.25}{\delta}}}{(1-\xi)\min_{i,j,k} \Big( \epsilon_{ij} \sqrt{\sum_{k \in N_i} |h_{ki}|^2 \beta_k P_k }\Big)}\\
    &+ (1 + \log T) \frac{2K (K + \theta) C G^{2}}{1-\xi} + (1 + \log T) \frac{1}{2}(8K+9\theta)G^{2}.
\end{aligned}
\end{equation}

Thus, by collecting the terms in \eqref{Final upperbound} into the constants $U_1$ and $U_2$, we obtain the bound stated in~\cref{Theorem2_convergence}, completing the proof.

% -------------------------------------------------------------
\bibliographystyle{IEEEtran}
\bibliography{References}

@ARTICLE{10542323,
  author={Yuan, Liangqi and Wang, Ziran and Sun, Lichao and Yu, Philip S. and Brinton, Christopher G.},
  journal={IEEE Internet of Things Journal}, 
  title={Decentralized Federated Learning: A Survey and Perspective}, 
  year={2024},
  volume={11},
  number={21},
  pages={34617-34638},
  doi={10.1109/JIOT.2024.3407584}}

@ARTICLE{10251949,
  author={Martínez Beltrán, Enrique Tomás and Pérez, Mario Quiles and Sánchez, Pedro Miguel Sánchez and Bernal, Sergio López and Bovet, Gérôme and Pérez, Manuel Gil and Pérez, Gregorio Martínez and Celdrán, Alberto Huertas},
  journal={IEEE Communications Surveys \& Tutorials}, 
  title={Decentralized Federated Learning: Fundamentals, State of the Art, Frameworks, Trends, and Challenges}, 
  year={2023},
  volume={25},
  number={4},
  pages={2983-3013},
  doi={10.1109/COMST.2023.3315746}}

@ARTICLE{10420449,
  author={Hallaji, Ehsan and Razavi-Far, Roozbeh and Saif, Mehrdad and Wang, Boyu and Yang, Qiang},
  journal={IEEE Transactions on Big Data}, 
  title={Decentralized Federated Learning: A Survey on Security and Privacy}, 
  year={2024},
  volume={10},
  number={2},
  pages={194-213},
  doi={10.1109/TBDATA.2024.3362191}}

@ARTICLE{9220780,
  author={Abdulrahman, Sawsan and Tout, Hanine and Ould-Slimane, Hakima and Mourad, Azzam and Talhi, Chamseddine and Guizani, Mohsen},
  journal={IEEE Internet of Things Journal}, 
  title={A Survey on Federated Learning: The Journey From Centralized to Distributed On-Site Learning and Beyond}, 
  year={2021},
  volume={8},
  number={7},
  pages={5476-5497},
  doi={10.1109/JIOT.2020.3030072}}

@ARTICLE{9460016,
  author={Khan, Latif U. and Saad, Walid and Han, Zhu and Hossain, Ekram and Hong, Choong Seon},
  journal={IEEE Communications Surveys \& Tutorials}, 
  title={Federated Learning for Internet of Things: Recent Advances, Taxonomy, and Open Challenges}, 
  year={2021},
  volume={23},
  number={3},
  pages={1759-1799},
  doi={10.1109/COMST.2021.3090430}}

@INPROCEEDINGS{9306745,
  author={Drainakis, Georgios and Katsaros, Konstantinos V. and Pantazopoulos, Panagiotis and Sourlas, Vasilis and Amditis, Angelos},
  booktitle={2020 IEEE 19th International Symposium on Network Computing and Applications (NCA)}, 
  title={Federated vs. Centralized Machine Learning under Privacy-elastic Users: A Comparative Analysis}, 
  year={2020},
  volume={},
  number={},
  pages={1-8},
  doi={10.1109/NCA51143.2020.9306745}}

@ARTICLE{9716792,
  author={Ye, Hao and Liang, Le and Li, Geoffrey Ye},
  journal={IEEE Journal of Selected Topics in Signal Processing}, 
  title={Decentralized Federated Learning With Unreliable Communications}, 
  year={2022},
  volume={16},
  number={3},
  pages={487-500},
  doi={10.1109/JSTSP.2022.3152445}}

@ARTICLE{10506083,
  author={Zhai, Zhiyuan and Yuan, Xiaojun and Wang, Xin},
  journal={IEEE Transactions on Wireless Communications}, 
  title={Decentralized Federated Learning via {MIMO} Over-the-Air Computation: Consensus Analysis and Performance Optimization}, 
  year={2024},
  volume={23},
  number={9},
  pages={11847-11862},
  doi={10.1109/TWC.2024.3385443}}

@INPROCEEDINGS{10279097,
  author={Michelusi, Nicolò},
  booktitle={ICC 2023 - IEEE International Conference on Communications}, 
  title={Decentralized Federated Learning via Non-Coherent Over-the-Air Consensus}, 
  year={2023},
  volume={},
  number={},
  pages={3102-3107},
  doi={10.1109/ICC45041.2023.10279097}}

@ARTICLE{10025677,
  author={Chen, Shuzhen and Wang, Yangyang and Yu, Dongxiao and Ren, Ju and Xu, Congan and Zheng, Yanwei},
  journal={IEEE Transactions on Computers}, 
  title={Privacy-Enhanced Decentralized Federated Learning at Dynamic Edge}, 
  year={2023},
  volume={72},
  number={8},
  pages={2165-2180},
  doi={10.1109/TC.2023.3239542}}

@INPROCEEDINGS{8433217,
  author={Gao, Huan and Zhang, Chunlei and Ahmad, Muaz and Wang, Yongqiang},
  booktitle={2018 IEEE Conference on Communications and Network Security (CNS)}, 
  title={Privacy-Preserving Average Consensus on Directed Graphs Using Push-Sum}, 
  year={2018},
  volume={},
  number={},
  pages={1-9},
  doi={10.1109/CNS.2018.8433217}}

@ARTICLE{10068288,
  author={Chen, Xiaomeng and Huang, Lingying and Ding, Kemi and Dey, Subhrakanti and Shi, Ling},
  journal={IEEE Transactions on Automatic Control}, 
  title={Privacy-Preserving Push-Sum Average Consensus via State Decomposition}, 
  year={2023},
  volume={68},
  number={12},
  pages={7974-7981},
  doi={10.1109/TAC.2023.3256479}}

@ARTICLE{10535197,
  author={Cheng, Huqiang and Liao, Xiaofeng and Li, Huaqing and Lü, Qingguo and Zhao, You},
  journal={IEEE Transactions on Signal and Information Processing over Networks}, 
  title={Privacy-Preserving Push-Pull Method for Decentralized Optimization via State Decomposition}, 
  year={2024},
  volume={10},
  number={},
  pages={513-526},
  doi={10.1109/TSIPN.2024.3402430}}

@ARTICLE{7405263,
  author={Nedić, Angelia and Olshevsky, Alex},
  journal={IEEE Transactions on Automatic Control}, 
  title={Stochastic Gradient-Push for Strongly Convex Functions on Time-Varying Directed Graphs}, 
  year={2016},
  volume={61},
  number={12},
  pages={3936-3947},
  doi={10.1109/TAC.2016.2529285}}

@INPROCEEDINGS{6426375,
  author={Tsianos, Konstantinos I. and Lawlor, Sean and Rabbat, Michael G.},
  booktitle={2012 IEEE 51st Conference on Decision and Control (CDC)}, 
  title={Push-Sum Distributed Dual Averaging for convex optimization}, 
  year={2012},
  volume={},
  number={},
  pages={5453-5458},
  doi={10.1109/CDC.2012.6426375}}

@ARTICLE{6930814,
  author={Nedić, Angelia and Olshevsky, Alex},
  journal={IEEE Transactions on Automatic Control}, 
  title={Distributed Optimization Over Time-Varying Directed Graphs}, 
  year={2015},
  volume={60},
  number={3},
  pages={601-615},
  doi={10.1109/TAC.2014.2364096}}

@ARTICLE{8988200,
  author={Pu, Shi and Shi, Wei and Xu, Jinming and Nedić, Angelia},
  journal={IEEE Transactions on Automatic Control}, 
  title={Push–Pull Gradient Methods for Distributed Optimization in Networks}, 
  year={2021},
  volume={66},
  number={1},
  pages={1-16},
  doi={10.1109/TAC.2020.2972824}}

@ARTICLE{10337617,
  author={Nguyen, Duong Thuy Anh and Nguyen, Duong Tung and Nedić, Angelia},
  journal={IEEE Transactions on Control of Network Systems}, 
  title={Accelerated $AB$/Push–Pull Methods for Distributed Optimization Over Time-Varying Directed Networks}, 
  year={2024},
  volume={11},
  number={3},
  pages={1395-1407},
  doi={10.1109/TCNS.2023.3338236}}

@ARTICLE{10115431,
  author={Wang, Yongqiang and Nedić, Angelia},
  journal={IEEE Transactions on Automatic Control}, 
  title={Tailoring Gradient Methods for Differentially Private Distributed Optimization}, 
  year={2024},
  volume={69},
  number={2},
  pages={872-887},
  doi={10.1109/TAC.2023.3272968}}

@INPROCEEDINGS{7526803,
  author={Van Sy Mai and Abed, Eyad H.},
  booktitle={2016 American Control Conference (ACC)}, 
  title={Distributed optimization over weighted directed graphs using row stochastic matrix}, 
  year={2016},
  volume={},
  number={},
  pages={7165-7170},
  doi={10.1109/ACC.2016.7526803}}

@ARTICLE{8267245,
  author={Xi, Chenguang and Mai, Van Sy and Xin, Ran and Abed, Eyad H. and Khan, Usman A.},
  journal={IEEE Transactions on Automatic Control}, 
  title={Linear Convergence in Optimization Over Directed Graphs With Row-Stochastic Matrices}, 
  year={2018},
  volume={63},
  number={10},
  pages={3558-3565},
  doi={10.1109/TAC.2018.2797164}}

@ARTICLE{10124282,
  author={Ghaderyan, Diyako and Aybat, Necdet Serhat and Aguiar, A. Pedro and Pereira, Fernando Lobo},
  journal={IEEE Transactions on Automatic Control}, 
  title={A Fast Row-Stochastic Decentralized Method for Distributed Optimization Over Directed Graphs}, 
  year={2024},
  volume={69},
  number={1},
  pages={275-289},
  doi={10.1109/TAC.2023.3275927}}

@ARTICLE{9013030,
  author={Xiong, Yongyang and Xu, Jinming and You, Keyou and Liu, Jianxing and Wu, Ligang},
  journal={IEEE Transactions on Control of Network Systems}, 
  title={Privacy-Preserving Distributed Online Optimization Over Unbalanced Digraphs via Subgradient Rescaling}, 
  year={2020},
  volume={7},
  number={3},
  pages={1366-1378},
  doi={10.1109/TCNS.2020.2976273}}

@ARTICLE{10552083,
  author={Gao, Wang and Zhao, Zhongyuan and Wei, Mengli and Yang, Ju and Zhang, Xiaogang and Li, Jinsong},
  journal={IEEE Transactions on Network Science and Engineering}, 
  title={Decentralized Online Bandit Federated Learning Over Unbalanced Directed Networks}, 
  year={2024},
  volume={11},
  number={5},
  pages={4264-4277},
  doi={10.1109/TNSE.2024.3409755}}

@ARTICLE{11079243,
  author={Zhao, Zhongyuan and Liu, Zhifei and Zhang, Can and Wei, Mengli},
  journal={IEEE Transactions on Vehicular Technology}, 
  title={Differential Privacy Decentralized Federated Learning for Internet of Vehicles over Time-varying Unbalanced Networks}, 
  year={2025},
  volume={},
  number={},
  pages={1-12},
  doi={10.1109/TVT.2025.3585176}}

@ARTICLE{9714350,
  author={Ouadrhiri, Ahmed El and Abdelhadi, Ahmed},
  journal={IEEE Access}, 
  title={Differential Privacy for Deep and Federated Learning: A Survey}, 
  year={2022},
  volume={10},
  number={},
  pages={22359-22380},
  doi={10.1109/ACCESS.2022.3151670}}

@ARTICLE{9069945,
  author={Wei, Kang and Li, Jun and Ding, Ming and Ma, Chuan and Yang, Howard H. and Farokhi, Farhad and Jin, Shi and Quek, Tony Q. S. and Vincent Poor, H.},
  journal={IEEE Transactions on Information Forensics and Security}, 
  title={Federated Learning With Differential Privacy: Algorithms and Performance Analysis}, 
  year={2020},
  volume={15},
  number={},
  pages={3454-3469},
  doi={10.1109/TIFS.2020.2988575}}

@misc{rodio2025optimizingprivacyutilitytradeoffdecentralized,
      title={Optimizing Privacy-Utility Trade-off in Decentralized Learning with Generalized Correlated Noise}, 
      author={Angelo Rodio and Zheng Chen and Erik G. Larsson},
      year={2025},
      eprint={2501.14644},
      archivePrefix={arXiv},
      primaryClass={cs.LG},
      url={https://arxiv.org/abs/2501.14644}, 
}

@misc{abrar2025nonconvexovertheairheterogeneousfederated,
      title={Non-Convex Over-the-Air Heterogeneous Federated Learning: A Bias-Variance Trade-off}, 
      author={Muhammad Faraz Ul Abrar and Nicolò Michelusi},
      year={2025},
      eprint={2510.26722},
      archivePrefix={arXiv},
      primaryClass={cs.LG},
      url={https://arxiv.org/abs/2510.26722}, 
}

@ARTICLE{6930789,
  author={Mateos-Núñez, David and Cortés, Jorge},
  journal={IEEE Transactions on Network Science and Engineering}, 
  title={Distributed Online Convex Optimization Over Jointly Connected Digraphs}, 
  year={2014},
  volume={1},
  number={1},
  pages={23-37},
  doi={10.1109/TNSE.2014.2363554}}

@article{zhu2019deep,
  title={Deep leakage from gradients},
  author={Zhu, Ligeng and Liu, Zhijian and Han, Song},
  journal={Advances in Neural Information Processing Systems},
  volume={32},
  year={2019}
}

@article{geiping2020inverting,
  title={Inverting gradients-how easy is it to break privacy in federated learning?},
  author={Geiping, Jonas and Bauermeister, Hartmut and Dr{\"o}ge, Hannah and Moeller, Michael},
  journal={Advances in Neural Information Processing Systems},
  volume={33},
  pages={16937--16947},
  year={2020}
}

@ARTICLE{10024757,
  author={Geng, Jiahui and Mou, Yongli and Li, Qing and Li, Feifei and Beyan, Oya and Decker, Stefan and Rong, Chunming},
  journal={IEEE Transactions on Big Data}, 
  title={Improved Gradient Inversion Attacks and Defenses in Federated Learning}, 
  year={2024},
  volume={10},
  number={6},
  pages={839-850},
  doi={10.1109/TBDATA.2023.3239116}}

@ARTICLE{9563232,
  author={Xing, Hong and Simeone, Osvaldo and Bi, Suzhi},
  journal={IEEE Journal on Selected Areas in Communications}, 
  title={Federated Learning Over Wireless Device-to-Device Networks: Algorithms and Convergence Analysis}, 
  year={2021},
  volume={39},
  number={12},
  pages={3723-3741},
  doi={10.1109/JSAC.2021.3118400}}

@INPROCEEDINGS{9322286,
  author={Ozfatura, E. and Rini, Stefano and Gündüz, D.},
  booktitle={GLOBECOM 2020 - 2020 IEEE Global Communications Conference}, 
  title={Decentralized {SGD} with Over-the-Air Computation}, 
  year={2020},
  volume={},
  number={},
  pages={1-6},
  doi={10.1109/GLOBECOM42002.2020.9322286}}

@ARTICLE{8316938,
  author={Xie, Pei and You, Keyou and Tempo, Roberto and Song, Shiji and Wu, Cheng},
  journal={IEEE Transactions on Automatic Control}, 
  title={Distributed Convex Optimization with Inequality Constraints over Time-Varying Unbalanced Digraphs}, 
  year={2018},
  volume={63},
  number={12},
  pages={4331-4337},
  doi={10.1109/TAC.2018.2816104}}

@article{MAI201994,
title = {Distributed optimization over directed graphs with row stochasticity and constraint regularity},
journal = {Automatica},
volume = {102},
pages = {94-104},
year = {2019},
issn = {0005-1098},
doi = {https://doi.org/10.1016/j.automatica.2018.07.020},
author = {Van Sy Mai and Eyad H. Abed}
}

@article{dwork2014algorithmic,
  title={The algorithmic foundations of differential privacy},
  author={Dwork, Cynthia and Roth, Aaron and others},
  journal={Foundations and Trends{\textregistered} in Theoretical Computer Science},
  volume={9},
  number={3--4},
  pages={211--407},
  year={2014},
  publisher={Now Publishers, Inc.}
}

@INPROCEEDINGS{11195385,
  author={Ziaeddini, Amir and Yakimenka, Yauhen and Kliewer, Jörg},
  booktitle={2025 IEEE International Symposium on Information Theory (ISIT)}, 
  title={Differentially-Private Decentralized Learning in Heterogeneous Multicast Networks}, 
  year={2025},
  volume={},
  number={},
  pages={1-6},
  doi={10.1109/ISIT63088.2025.11195385}}

@ARTICLE{8952884,
  author={Yang, Kai and Jiang, Tao and Shi, Yuanming and Ding, Zhi},
  journal={IEEE Transactions on Wireless Communications}, 
  title={Federated Learning via Over-the-Air Computation}, 
  year={2020},
  volume={19},
  number={3},
  pages={2022-2035},
  doi={10.1109/TWC.2019.2961673}}

@ARTICLE{11131470,
  author={Xiao, Yue and Ye, Yu and Tegos, Sotiris A. and Diamantoulakis, Panagiotis D. and You, Yang and Karagiannidis, George K.},
  journal={IEEE Transactions on Vehicular Technology}, 
  title={Heterogeneous Wireless Federated Learning Framework via Over-the-Air Computation}, 
  year={2026},
  volume={75},
  number={2},
  pages={3041-3054},
  doi={10.1109/TVT.2025.3600821}}

@ARTICLE{11048956,
  author={Zheng, Zijian and Deng, Yansha and Yi, Wenqiang and Shin, Hyundong and Nallanathan, Arumugam},
  journal={IEEE Transactions on Communications}, 
  title={Over-the-Air Computation Enabled Semi-Asynchronous Wireless Federated Learning}, 
  year={2025},
  volume={73},
  number={10},
  pages={8919-8936},
  doi={10.1109/TCOMM.2025.3582727}}

@inproceedings{Balle2018PrivacyAmplification,
  author    = {Borja Balle and
               Gilles Barthe and
               Marco Gaboardi},
  title     = {Privacy Amplification by Subsampling: Tight Analyses via Couplings and Divergences},
  booktitle = {Adv. in Neural Inf. Process. Syst. 31 (NeurIPS 2018)},
  pages     = {6280--6290},
  year      = {2018},
}

@ARTICLE{10902529,
  author={Huang, Qian and Yi, Xiaoyin and Qi, Fei and Liu, Lei and Xie, Qingming and Jiang, Qin and Hu, Chunxia},
  journal={IEEE Transactions on Broadcasting}, 
  title={Enhancing 5{G} {V2X} {URLLC} Broadcast/Multicast Services With {FL}-Based Wireless Resource Allocation}, 
  year={2025},
  volume={71},
  number={2},
  pages={384-396},
  doi={10.1109/TBC.2025.3541887}}

@ARTICLE{9014530,
  author={Amiri, Mohammad Mohammadi and Gündüz, Deniz},
  journal={IEEE Transactions on Wireless Communications}, 
  title={Federated Learning Over Wireless Fading Channels}, 
  year={2020},
  volume={19},
  number={5},
  pages={3546-3557},
  doi={10.1109/TWC.2020.2974748}}

@INPROCEEDINGS{11148660,
  author={Feng, Hexin and Wang, Rui and Liu, Erwu and Ni, Wei},
  booktitle={2025 IEEE/CIC International Conference on Communications in China (ICCC)}, 
  title={Decentralized Federated Learning via Dynamic Topologies and {MIMO} Over-the-Air Computation}, 
  year={2025},
  volume={},
  number={},
  pages={1-6},
  doi={10.1109/ICCC65529.2025.11148660}}

@article{DBLP:journals/corr/abs-1808-00087,
  author       = {Yu{-}Xiang Wang and
                  Borja Balle and
                  Shiva Prasad Kasiviswanathan},
  title        = {Subsampled {R{\'{e}}nyi} Differential Privacy and Analytical Moments
                  Accountant},
  journal      = {CoRR},
  volume       = {abs/1808.00087},
  year         = {2018},
  url          = {http://arxiv.org/abs/1808.00087},
  eprinttype   = {arXiv},
  eprint       = {1808.00087},
  bibsource    = {dblp computer science bibliography, https://dblp.org}
}

@article{DBLP:journals/corr/Mironov17,
  author       = {Ilya Mironov},
  title        = {Renyi Differential Privacy},
  journal      = {CoRR},
  volume       = {abs/1702.07476},
  year         = {2017},
  url          = {http://arxiv.org/abs/1702.07476},
  eprinttype   = {arXiv},
  eprint       = {1702.07476},
  bibsource    = {dblp computer science bibliography, https://dblp.org}
}

% -------------------------------------------------------------
% \begin{IEEEbiographynophoto}{Amir Ziaeddini}
% Biography...
% \end{IEEEbiographynophoto}

\end{document}